\theoremstyle{plain}
\newtheorem{thm}{Theorem}[section]
\newtheorem{prop}[thm]{Proposition}
\theoremstyle{definition}
\theoremstyle{plain}
\newtheorem{example}{\bf Example}
\def \newequation#1#2{
   \@definecounter{#1}
   \@namedef{the#1}{\hbox{#2}}
   \@namedef{#1}{$$\refstepcounter{#1}}
   \@namedef{end#1}{
      \eqno \csname the#1\endcsname $$\global\@ignoretrue
      }
   }
\def \newequation#1#2{
   \@definecounter{#1}
   \@namedef{the#1}{\hbox{#2}}
   \@namedef{#1}{$$\refstepcounter{#1}}
   \@namedef{end#1}{
      \eqno \csname the#1\endcsname $$\global\@ignoretrue
      }
   }
\def \newequation#1#2{
   \@definecounter{#1}
   \@namedef{the#1}{\hbox{#2}}
   \@namedef{#1}{$$\refstepcounter{#1}}
   \@namedef{end#1}{
      \eqno \csname the#1\endcsname $$\global\@ignoretrue
      }
   }
\def \newequation#1#2{
   \@definecounter{#1}
   \@namedef{the#1}{\hbox{#2}}
   \@namedef{#1}{$$\refstepcounter{#1}}
   \@namedef{end#1}{
      \eqno \csname the#1\endcsname $$\global\@ignoretrue
      }
   }
\def \newequation#1#2{
   \@definecounter{#1}
   \@namedef{the#1}{\hbox{#2}}
   \@namedef{#1}{$$\refstepcounter{#1}}
   \@namedef{end#1}{
      \eqno \csname the#1\endcsname $$\global\@ignoretrue
      }
   }
\def \newequation#1#2{
   \@definecounter{#1}
   \@namedef{the#1}{\hbox{#2}}
   \@namedef{#1}{$$\refstepcounter{#1}}
   \@namedef{end#1}{
      \eqno \csname the#1\endcsname $$\global\@ignoretrue
      }
   }
\def \newequation#1#2{
   \@definecounter{#1}
   \@namedef{the#1}{\hbox{#2}}
   \@namedef{#1}{$$\refstepcounter{#1}}
   \@namedef{end#1}{
      \eqno \csname the#1\endcsname $$\global\@ignoretrue
      }
   }
\tiny\textsl{2008}}
\begin{document}

    \vspace{\stretch{1}}

   \title{Computing VaR and CVaR using Stochastic Approximation and Adaptive Unconstrained Importance Sampling}
          
          \author{{\normalsize O. Bardou${}^{1}$  , N. Frikha${}^{2}$ , G. Pag\`{e}s${}^{3}$}} 
    \maketitle
   \begin{abstract}
  Value-at-Risk (VaR) and Conditional-Value-at-Risk (CVaR) are two risk measures which are widely used in the practice of risk management. This paper deals with the problem of estimating both VaR and CVaR using stochastic approximation (with decreasing steps): we propose a first Robbins-Monro (RM) procedure based on Rockafellar-Uryasev's identity for the CVaR. Convergence rate of this algorithm to its target satisfies a Gaussian Central Limit Theorem. As a second step, in order to speed up the initial procedure, we propose a recursive and adaptive importance sampling (IS) procedure which induces a significant variance reduction of both VaR and CVaR procedures. This idea, which has been investigated by many authors, follows a new approach introduced in \cite{lemairepages}. Finally, to speed up the initialization phase of the IS algorithm, we replace the original confidence level of the VaR by a slowly moving risk level. We prove that the weak convergence rate of the resulting procedure is ruled by a Central Limit Theorem with minimal variance and its efficiency is illustrated on several typical energy portfolios.\end{abstract}
   
\medskip
\noindent {This work appeared in \emph{Monte Carlo Methods and Applications} 2009.}

\medskip

\footnotetext[{1}]{Laboratoire de Probabilit\'{e}s et Mod\`{e}les al\'{e}atoires, UMR 7599, Universit\'{e} Pierre et Marie Curie and GDF Suez Research and Innovation Department, France, e-mail: olivier-aj.bardou@gdfsuez.com}
\footnotetext[{2}]{Laboratoire de Probabilit\'{e}s et Mod\`{e}les al\'{e}atoires, UMR 7599, Universit\'{e} Pierre et Marie Curie and GDF Suez Research and Innovation Department, France, e-mail: frikha.noufel@gmail.com}
\footnotetext[{3}]{Laboratoire de Probabilit\'{e}s et Mod\`{e}les al\'{e}atoires, UMR 7599, Universit\'{e} Pierre et Marie Curie, France, e-mail: gilles.pages@upmc.fr}
\textsl{Keywords}: \textsl{VaR, CVaR, Stochastic
Approximation, Robbins-Monro algorithm, Importance Sampling, Girsanov.} 

%
\begin{section}{Introduction}

Following financial institutions, energy companies are developing a risk management framework to face the new price and volatility risks associated to the growth of energy markets. Value-at-Risk (VaR) and Conditional Value-at-Risk (CVaR) are certainly the best known and the most common risk measures used in this context, especially for the evaluation of extreme losses potentially faced by traders. Naturally related to rare events, the estimation of these risk measures is a numerical challenge. The Monte Carlo method, which is often the only available numerical device in such a general framework, must always be associated to efficient reduction variances techniques to encompass its slow convergence rate. In some specific cases, Gaussian approximations can lead to semi-closed form estimators. But, if these approximations can be of some interest when considering the {\it yield} of a portfolio, they turn out to be useless when estimating e.g. the VaR on the EBITDA (Earnings Before Interest, Taxes, Depreciation, and Amortization) of a huge portfolio as it is often the case in the energy sector.

In this article, we introduce an alternative estimation method to estmate both VaR and CVaR, relying on the use of recursive stochastic algorithms. By definition, the VaR at level $\alpha\in(0,1)$ (VaR$_{\alpha}$) of a given portfolio is the lowest amount not exceeded by the loss with probability $\alpha$ (usually $\alpha \geq 95\%$). The Conditional Value-at-Risk at level $\alpha$ (CVaR$_{\alpha}$) is the conditional expectation of the portfolio losses beyond the VaR$_{\alpha}$ level. Compared to VaR, CVaR is known to have better properties. It is a coherent risk measure in the sense of Artzner, Delbaen, Eber and Heath, see \cite{artzner}. 

The most commonly used method to compute VaR is the inversion of the simulated empirical loss distribution function using Monte Carlo or historical simulation tools. The historical simulation method usually assumes that the asset returns in the future are independent and identically distributed, having the same distribution as they had in the past. Over a time interval $[t,T]$, the loss is defined by $L:= V(S_{t},t)-V(S_{t}+\Delta S, T)$, where $S_{t}$ denotes the market price vector observed at time $t$, $\Delta S=S_{T}-S_{t}$ the variation of $S$ over the time interval $[t,T]$ -which can be calculated using historical data- and $V(S_{t},t)$ the portfolio value at time $t$. The distribution of this loss $L$ can be computed with the corresponding VaR at a given probability level by the inversion of the empirical function method. However, when the market price dynamics follow a general diffusion process solution of a stochastic differential equation (SDE), the assumption of asset returns independence is no longer available.

	To circumvent this problem, Monte Carlo simulation tools are generally used. Another widely used method relies on a linear (Normal approximation) or quadratic expansion (Delta-Gamma approximation) and assume a joint normal (or log-normal) distribution for $\Delta S$. The Normal approximation method gives $L$ a normal distribution, thus the computation of the VaR$_{\alpha}$ is straightforward. However, when there is a non-linear dependence between the portfolio value and the prices of the underlying assets (think of a portfolio with options) such approximation is no longer acceptable. The Delta-Gamma approximation tries to capture some non linearity by adding a quadratic term in the loss expansion. Then, it is possible to find the distribution of the resulting approximation in order to obtain an approximation of the VaR. For more details about these methods, we refer to \cite{britten}, \cite{duffie}, \cite{glasserman}, \cite{glassermanb} and \cite{rouvinez}. Such approximations are no longer acceptable when considering portfolios with long maturity ($T-t=$ 1 year up to 10 years) or when the loss is a functional of a general path-dependent SDE. 

  In the context of hedging or optimizing a portfolio of financial instruments to reduce the CVaR, it is shown in \cite{rockafellarb} that it is possible to compute both VaR and CVaR (actually calculate VaR and optimize CVaR) by solving a convex optimization problem with a linear programming approach. It consists in generating loss scenarios and then in introducing constraints in the linear programming problem. Although they address a different problem, this method can be used to compute both VaR and CVaR. The advantage of such a method is that it is possible to estimate both VaR and CVaR simultaneously without assuming that the market prices have a specified distribution (e.g. normal, log-normal, ...). The main drawback is that the dimension (number of constraints) of the linear programming problem to be solved is equal to the number of simulated scenarios. In our approach, we are no limited by the number of generated sample paths used in the procedure.

  The idea to compute both VaR and CVaR with one procedure comes from the fact that they are strongly linked as they appear as the solutions and the value of the same convex optimisation problem (see Proposition 2.1) as pointed out \cite{rockafellarb}. Moreover both the objective function of the minimization problem and its gradient read as an expectation. This leads us to define consistent and asymptotically normal estimators of both quantities as the limit of a global Robbins-Monro (RM) procedure. Consequently, we are no longer constrained by the number of samples paths used in the estimation.

 A significant advantage of this recursive approach, especially in regard to the inversion of the empirical function method is that we only estimate the quantities of interest and not the whole inverse of the distribution function. Furthermore, we do not need to make approximations of the loss or of the convex optimization problem to be solved. Moreover, the implementation of the algorithm is straightforward. However to make it really efficient we need to modify it owing to the fact that VaR and CVaR computation is closely related to the simulation of rare events. That is why as a necessary improvement, we introduce a (recursive and adaptive) variance reduction method based on an importance sampling (IS) paradigm. 

 Let us be a bit more specific. Basically in this kind of problem we are interested in events that are observed with a very low probability (usually less that $5\%$, $1\%$ or even $0.1\%$) so that we obtain few significant replications to update our estimates. Actually, interesting losses are those that exceed the VaR, $i.e.$ the ones that are ``in the tail'' of the loss distribution. Thus in order to compute more accurate estimates of both quantities of interest, it is necessary to generate more samples in the tail of $L$, the area of interest. A general tool used in this situation is IS. 

 The basic principle of IS is to modify the distribution of $L$ by an equivalent change of measure to obtain more ``interesting'' samples that will lead to better estimates of the VaR and CVaR. The main issue of IS is to find a right change of measure (among a parameterized family) that will induce a significant variance reduction. In \cite{glassermanb} and \cite{glassermanc}, a change of measure based on a large deviation upper bound is proposed to estimate the loss probability $\mathbb{P}(L>x)$ for several values of $x$. Then, it is possible to estimate the VaR by interpolating between the estimated loss probabilities.

 Although this approach provides an asymptotically optimal IS distribution, it is strongly based on the fact that the Delta-Gamma approximation holds exactly and relies on the assumption that, conditionally to the past data, market moves are normally distributed. Moreover, as shown in \cite{glwang}, importance sampling estimators based on a large deviations change of measure can have variance that increases with the rarity of the event, and even infinite variance. In \cite{egloff}, the VaR$_{\alpha}$ is estimated by using a quantile estimator based on the inversion of the empirical weighted function and combined with Robbins-Monro (RM) algorithm with repeated projection devised to produce the optimal measure change for IS purpose. This kind of IS algorithm is known to converge toward the optimal importance sampling parameter only after a (long) stabilization phase and provided that the compact sets have been appropriately specified. By contrast, our parameters are optimized by an adaptive unconstrained ($i.e.$ without projections) RM algorithm naturally combined with our VaR-CVaR procedure.

 One major issue that arises when combining the VaR-CVaR algorithm with the recursive IS procedure is to ensure that the IS parameters do move appropriately toward the critical risk area. They may remain stuck at the very beginning of the IS procedure. To circumvent this problem, we make the confidence level slowly increase from a low level (say $50\%$) to $\alpha$ by introducing a deterministic sequence $(\alpha_{n})_{n\geq0}$ of confidence level that converges toward $\alpha$. This kind of incremental threshold increase has been proposed previously \cite{kroese} in a different framework (use of cross entropy in rare event simulation). It speeds up the initialization phase of the IS algorithm and consequently improves the variance reduction. Thus, we can truly experiment asymptotic convergence results in practice. 

\bigskip

 The paper is organized as follows. In the next section, we present some theoretical results about VaR and CVaR. We introduce the VaR-CVaR stochastic algorithm in its first and naive version and study its convergence rate. We also introduce some background about IS using stochastic approximation algorithm. Section 3 is devoted to the design of an optimal procedure using an adaptive variance reduction procedure. We present how it modifies the asymptotic variance of our first CLT. In Section 4 we provide some extensions to the exponential change of measure and to deal with the case of infinite dimensional setting. Section 5 is dedicated to numerical examples. We propose several portfolios of options on several assets in order to challenge the algorithm and display variance reduction factors obtained using the IS procedure. To prevent the freezing of the algorithm during the first iterations of the IS procedure, we also consider a deterministic moving risk level $\alpha_{n}$ which replace $\alpha$ to speed up the initialization phase and improve the reduction of variance. We prove theoretically that modifying in this way the algorithm doesn't change the previous CLT and fasten the convergence.

\medskip
\noindent {\bf Notations:} $\bullet$  $|.|$ will denote the canonical Euclidean norm on $\mathbb{R}^{d}$ and $\left\langle .,.\right\rangle$ will denote the canonical inner product.

\noindent $\bullet$ $\stackrel{\mathcal{L}}{\longrightarrow}$ will denote the convergence in distribution  and  $\stackrel{a.s.}{\longrightarrow}$ will denote the almost sure convergence.

\noindent $\bullet$ $x_{+}:=\max(0,x)$ will denote the positive part function.

\end{section}

%

\begin{section}{VaR, CVaR using stochastic approximation and some background on recursive IS}
\noindent It is rather natural to consider that the loss of the portfolio over the considered time horizon can be written as a function of a structural finite dimensional random vector, $i.e.$ $L=\varphi(X)$, where $X$ is a $\mathbb{R}^{d}$-valued random vector defined on the probability space $\left(\Omega,A,\mathbb{P}\right)$ and $\varphi: \mathbb{R}^{d} \rightarrow \mathbb{R}$ is a Borel function. $\varphi$ is the function representing the composition of the portfolio which remains fixed and $X$ is a structural random vector used to model the market prices over the time interval; therefore we do not need to specify the dynamics of the market prices and only rely on the fact that it is possible to sample from the distribution of $X$. For instance, in a Black-Scholes framework, $X$ is a Gaussian vector and $\varphi$ can be a portfolio of vanilla options. In more sophisticated models or portfolio, $X$ can be a vector of Brownian increments related to the Euler scheme of a diffusion. The VaR at level $\alpha\in\left(0,1\right)$ is the lowest $\alpha$-quantile of the distribution $\varphi(X)$ \textsl{$i.e.$}:

$$
\text{VaR}_{\alpha}(\varphi(X)):=\inf\left\{\xi \mbox{ }|\mbox{ }\mathbb{P}\left(\varphi(X)\leq \xi\right)\geq\alpha\right\}.
$$

\noindent Since $\lim_{\xi \rightarrow +\infty}\mathbb{P}\left(\varphi(X)\leq
\xi\right)=1$, we have $\left\{\xi \mbox{ }|\mbox{ }\mathbb{P}\left(\varphi(X)\leq \xi\right)\geq\alpha\right\}\neq \varnothing  $. Moreover, we have $\lim_{\xi\rightarrow-\infty}\mathbb{P}\left(\varphi(X)\leq
\xi\right)=~0$, which implies that $\left\{\xi \mbox{ }|\mbox{ }\mathbb{P}\left(\varphi(X)\leq \xi\right)\geq\alpha\right\}$ is bounded from below so that the VaR always exists. We assume that the distribution function of $\varphi(X)$ is continuous ($i.e.$ without atoms) so that the VaR is the lowest solution of the equation:
$$
\mathbb{P}\left(\varphi(X)\leq \xi\right)=\alpha.
$$

\noindent Three values of $\alpha$ are commonly considered: 0.95, 0.99, 0.995 so that it is usually close to 1 and the tail of interest has probability $1-\alpha$. If the distribution function is (strictly) increasing, the solution of the above equation is unique, otherwise, there may be more than one solution. In fact, in what follows, we will consider that \emph{any} solution of the previous equation is the VaR. Another risk measure generally used to provide information about the tail of the distribution of $\varphi(X)$ is the \textsl{Conditional Value-at-Risk} (CVaR) (at level $\alpha$). As soon as $\varphi(X) \in L^{1}(\mathbb{P})$, it is defined by:
$$
\text{CVaR}_{\alpha}(\varphi(X)):=\mathbb{E}\left[\varphi(X) | \varphi(X) \geq \text{VaR}_{\alpha}(\varphi(X))\right].
$$

\noindent The CVaR of $\varphi(X)$ is simply the conditional expectation of $\varphi(X)$ given that it lies inside the critical risk area. To capture more information on the conditional distribution of $\varphi(X)$, it seems natural to consider more general risk measures like for example the conditional variance. In a more general framework we can be interested in estimating the $\Psi$-\textsl{Conditional Value at Risk} ($\Psi$-CVaR) (at level $\alpha$) where  $\Psi:\mathbb{R}\rightarrow \mathbb{R}$ is a continuous function. As soon as $\Psi(\varphi(X)) \in L^{1}(\mathbb{P})$, it is defined by:

\begin{equation}{}
\Psi\mbox{-}\textnormal{CVaR}_{\alpha}(\varphi(X)):=\mathbb{E}\left[\Psi(\varphi(X)) | \varphi(X) \geq \text{VaR}_{\alpha}(\varphi(X))\right].
\label{PhiCVaR}
\end{equation}

\noindent When $\Psi \equiv Id$ and $\varphi(X) \in L^{1}(\mathbb{P})$,~\eqref{PhiCVaR} is the regular CVaR of $\varphi(X)$. When $\Psi\equiv x \mapsto x^{2}$, equation~\eqref{PhiCVaR} is but the conditional quadratic norm of $\varphi(X)$.

\medskip

\subsection{Representation of VaR and $\Psi\mbox{-CVaR}$ as expectations}
\noindent The idea to devise a stochastic approximation procedure to compute VaR and CVaR, and more generally the $\Psi$-CVaR, comes from the fact that these two quantities are solutions
of a convex optimization problem whose value function can be represented as an expectation as pointed out by Rockafellar and Uryasev in \cite{rockafellar}.

\begin{prop}
Let $V$ and $V_{\Psi}$ be the functions defined by:
\begin{equation}{}
V(\xi)=\mathbb{E}\left[v(\xi,X)\right] \hspace*{0.2cm} \mbox{ and } \hspace*{0.2cm}
V_{\Psi}(\xi)=\mathbb{E}\left[w(\xi,X)\right]
\label{Vfunctions}
\end{equation}

\noindent where
\begin{equation}{}
v(\xi,x):=\xi + \frac{1}{1-\alpha}(\varphi(X)-\xi)_{+} \hspace*{0.2cm} \mbox{ and } \hspace*{0.2cm}
w(\xi,x):=\xi+\frac{1}{1-\alpha} (\Psi(\varphi(x))-\xi) \mbox{\bf 1}_{\left\{\varphi(x) \geq \xi \right\}}.
\label{vwfunctions}
\end{equation}

\noindent Suppose that the distribution function of $\varphi(X)$ is continuous and that $\varphi(X) \in L^{1}(\mathbb{P})$. Then, the function $V$ is convex, differentiable and the $\textnormal{VaR}_{\alpha}(\varphi(X))$ is any point of the set:

$$\arg\min V= {\left\{\xi \in \mathbb{R}\ | \ V'(\xi)=0 \right\}}={\left\{\xi \mbox{ } | \mbox{ } \mathbb{P}(\varphi(X) \leq \xi)=\alpha\right\}},$$

\noindent where $V'$ is the derivative of $V$ defined for every $\xi \in \mathbb{R}$ by 
\begin{equation}
V'(\xi)=\mathbb{E}\left[\frac{\partial v}{\partial \xi} \left(\xi,X\right)\right].
\label{gradientV}
\end{equation} 

\noindent Furthermore,

$$
\textnormal{CVaR}_{\alpha}(\varphi(X))=\min_{\xi \in \mathbb{R}}V(\xi)
$$

\noindent and, if $\Psi$ is continuous and that $\Psi\left(\varphi(X)\right) \in L^{1}(\mathbb{P})$, for every $\xi^{*}_{\alpha} \in \arg \min V$ ($i.e.$, $\xi^{*}_{\alpha}$ is a $\textnormal{VaR}_{\alpha}(\varphi(X))$)

$$
\Psi\mbox{-}\textnormal{CVaR}_{\alpha}(\varphi(X))=V_{\Psi}(\xi^{*}_{\alpha}).
$$
\end{prop}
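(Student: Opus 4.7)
The plan is to exploit three elementary properties of the integrand $v$: convexity in $\xi$, a uniform-in-$x$ Lipschitz bound in $\xi$, and pointwise differentiability in $\xi$ off a null set. First I would note that $u \mapsto u_+$ is convex and $1$-Lipschitz, so $\xi \mapsto v(\xi,x)$ is convex and Lipschitz in $\xi$ uniformly in $x$. Integration preserves convexity, and $\varphi(X) \in L^1(\mathbb{P})$ ensures finiteness at (any) one point, so $V$ is convex on all of $\mathbb{R}$.

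For the gradient formula, fix $\xi \in \mathbb{R}$. For every $x$ with $\varphi(x) \neq \xi$ one has $\partial_{\xi} v(\xi,x) = 1 - (1-\alpha)^{-1}\mathbf{1}_{\{\varphi(x) \geq \xi\}}$. The continuity of the distribution of $\varphi(X)$ forces $\mathbb{P}(\varphi(X)=\xi)=0$, and the uniform Lipschitz bound supplies an integrable dominating function for the difference quotients. Dominated convergence then yields \eqref{gradientV}, namely
\begin{equation*}
V'(\xi) \;=\; 1 - \frac{1}{1-\alpha}\,\mathbb{P}\bigl(\varphi(X) \geq \xi\bigr).
\end{equation*}
Setting $V'(\xi)=0$ rearranges to $\mathbb{P}(\varphi(X) \geq \xi) = 1-\alpha$, which by continuity of the distribution is equivalent to $\mathbb{P}(\varphi(X) \leq \xi) = \alpha$. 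Since $V$ is convex and everywhere differentiable, $\arg\min V = \{V' = 0\}$, which is precisely the set of $\textnormal{VaR}_{\alpha}$'s recalled at the beginning of the section.

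For the last two identities, pick any $\xi^{*}_{\alpha} \in \arg\min V$, so that $\mathbb{P}(\varphi(X) \geq \xi^{*}_{\alpha}) = 1-\alpha$. Writing $(\varphi(X) - \xi^{*}_{\alpha})_{+} = (\varphi(X) - \xi^{*}_{\alpha})\mathbf{1}_{\{\varphi(X) \geq \xi^{*}_{\alpha}\}}$, expanding, and cancelling the $\xi^{*}_{\alpha}$-terms, one obtains
\begin{equation*}
V(\xi^{*}_{\alpha}) \;=\; \frac{1}{1-\alpha}\,\mathbb{E}\bigl[\varphi(X)\mathbf{1}_{\{\varphi(X) \geq \xi^{*}_{\alpha}\}}\bigr] \;=\; \mathbb{E}\bigl[\varphi(X) \,\big|\, \varphi(X) \geq \xi^{*}_{\alpha}\bigr] \;=\; \textnormal{CVaR}_{\alpha}(\varphi(X)).
\end{equation*}
The same algebraic cancellation applied to $w$, under the assumption $\Psi(\varphi(X)) \in L^1(\mathbb{P})$, gives $V_{\Psi}(\xi^{*}_{\alpha}) = (1-\alpha)^{-1}\mathbb{E}[\Psi(\varphi(X))\mathbf{1}_{\{\varphi(X) \geq \xi^{*}_{\alpha}\}}] = \Psi\textnormal{-CVaR}_{\alpha}(\varphi(X))$.

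There is no genuine obstacle; the only step that deserves care is the interchange of derivative and expectation, and it is routine thanks to the uniform Lipschitz bound on $v(\cdot,x)$. The continuity assumption on the distribution of $\varphi(X)$ is used essentially twice: to make $v(\cdot,X)$ differentiable at $\xi$ almost surely, and to convert $\mathbb{P}(\varphi(X) \geq \xi^{*}_{\alpha}) = 1-\alpha$ into the denominator needed to recognize a conditional expectation in the final step.
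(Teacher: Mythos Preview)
Your proof is correct and follows essentially the same route as the paper: convexity of $V$ inherited from $\xi\mapsto(\varphi(x)-\xi)_+$, dominated convergence to justify differentiating under the expectation (the paper uses the explicit bound $|\partial_\xi v|\le 1\vee\frac{\alpha}{1-\alpha}$, you phrase it via the uniform Lipschitz constant, which amounts to the same thing), and then the same algebraic cancellation at $\xi^*_\alpha$ to recover the conditional expectations. No substantive difference.
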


\begin{proof}[ {\textbf{Proof.}}]
Since the functions $\xi \mapsto (\varphi(x)-\xi)_{+}$, $x\in \mathbb{R}^{d}$, are convex, the function V is convex. $\mathbb{P}(dw)$-$a.s.$, $\frac{\partial v}{\partial \xi}(\xi,X(w))$ exists at every $\xi \in \mathbb{R}$ and 

$$ 
\mathbb{P}(dw) \mbox{-} a.s., \ \  \left|\frac{\partial v}{\partial \xi}( \xi,X(w))\right|\leq 1\vee\frac{\alpha}{1-\alpha}.
$$

Thanks to Lebesgue Dominated Convergence Theorem, one can interchange differentiation and expectation, so that $V$ is differentiable with derivative
$V'(\xi)=1-\frac{1}{1-\alpha}\mathbb{P}(\varphi(X)>\xi)$ and reaches its absolute minimum at any
$\xi^{*}_{\alpha}$ satisfying $\mathbb{P}(\varphi(X)>\xi^{*}_{\alpha})=1-\alpha$ \textsl{$i.e.$} $\mathbb{P}(\varphi(X)\leq\xi^{*}_{\alpha})=\alpha$. 

\noindent Moreover, it is clear that:
\begin{eqnarray*}
V(\xi^{*}_{\alpha}) &=& \xi^{*}_{\alpha}+ \frac{\mathbb{E}[(\varphi(X)-\xi^{*}_{\alpha})_{+}]}{\mathbb{P}(\varphi(X) > \xi^{*}_{\alpha})} \\
&=& \frac{\xi^{*}_{\alpha}\mathbb{E}[\mbox{\bf 1}_{\varphi(X) > \xi^{*}_{\alpha}}]+\mathbb{E}[(\varphi(X)-\xi^{*}_{\alpha})_{+}]}{\mathbb{P}(\varphi(X) > \xi^{*}_{\alpha})} \\
&=& \mathbb{E}\left[\varphi(X)|\varphi(X)>\xi^{*}_{\alpha}\right]
\end{eqnarray*}
and, in the same way, $V_{\Psi}(\xi^{*}_{\alpha})=\Psi\mbox{-}\textnormal{CVaR}_{\alpha}(\varphi(X))$. This completes the proof.
\end{proof}

\noindent \textbf{Remark:} Actually, one could consider a more general framework by including any risk measure defined by an integral representation with respect to $X$: 
$$
\mathbb{E}[\Lambda(\xi^{*}_{\alpha},X)]
$$ 
where $\Lambda$ is a (computable) Borel function.

\medskip
\subsection{Stochastic gradient and its adaptive companion procedure: a first naive approach} 

\noindent The above representation~\eqref{gradientV} naturally yields a stochastic gradient procedure derived from the convex Lyapunov function $V$ which will (hopefully) converge toward $\xi^{*}_{\alpha}:=\textnormal{VaR}_{\alpha}(\varphi(X))$. Then, a recursive companion procedure based on~\eqref{Vfunctions} can be easily devised having $C^{*}_{\alpha}:=\Psi\text{-}\textnormal{CVaR}_{\alpha}(\varphi(X))$ as target. There is no reason to believe that this first version can do better than the empirical quantile estimate. But, it is a necessary phase in order to understand how our recursive IS algorithm (to be devised further on) can be combined with this first procedure.

\noindent First we set 
\begin{equation}
H_{1}(\xi,x):=\frac{\partial v}{\partial \xi}(\xi,x)=1-\frac{1}{1-\alpha}\mbox{\bf 1}_{\left\{\varphi(x)\geq\xi\right\}},
\label{H1def}
\end{equation}

\noindent so that, 
$$
V'(\xi)=\mathbb{E}\left[H_{1}\left(\xi,X\right)\right].
$$

\noindent Since we are looking for $\xi$ for which $\mathbb{E}\left[H_{1}(\xi,X)\right]=0$, we implement a stochastic gradient descent derived from the Lyapunov function $V$ to approximate $\xi^{*}_{\alpha}:=VaR_{\alpha}(\varphi(X))$, $i.e.$, we use the RM algorithm:
\begin{equation}{}
    \xi_{n}= \xi_{n-1}-\gamma_{n}H_{1}(\xi_{n-1},X_{n}), n \geq 1, \ \ \xi_{0}\in L^{1}(\mathbb{P}),
\label{RMVaR}
\end{equation}

\noindent where $(X_{n})_{n\geq1}$ is an i.i.d. sequence of random variables with the same distribution as $X$, independent of $\xi_{0}$, with $\mathbb{E}[|\xi_{0}|]<+\infty$ and $(\gamma_{n})_{n \geq 1}$ is a deterministic step sequence (decreasing to $0$) satisfying:
\begin{hyp}
\sum_{n\geq1} \gamma_{n}=+\infty \ \ \mbox{ and } \ \ \sum_{n\geq1} \gamma_{n}^{2}<+\infty.
\label{stepC}
\end{hyp}

\noindent In order to derive the $a.s.$ convergence of~\eqref{RMVaR} we introduce the following additional assumption on the distributions of $\varphi(X)$ and $\Psi(\varphi(X))$. Let $a>0$,
\begin{hyp1}
\varphi(X) \mbox{ has a continuous distribution function and }\Psi(\varphi(X)) \in L^{2a}(\mathbb{P}).
\end{hyp1}

\noindent Actually, Equation~\eqref{RMVaR} can be seen either as a regular RM procedure with mean function $V'$ since it is increasing (see e.g. \cite{duflob} p.50 and p.66) or as a recursive gradient descent procedure derived from the Lyapunov function $V$. Both settings yield the $a.s.$ convergence toward its target $\xi^{*}_{\alpha}$. To establish the $a.s.$ convergence of $(\xi_{n})_{n\geq1}$ (and of our different RM algorithms), we will rely on the following theorem. For a proof of this slight extension of Robbins-Monro Theorem and of the $a.s.$ convergence of $(\xi_{n})_{n\geq1}$ (under assumptions $(A1)$ and $(A2)_{1}$), we refer to~\cite{frikha}. 

\begin{thm} \textsl{(Robbins-Monro Theorem (variant)).} Let $H:\mathbb{R}^{q}\times\mathbb{R}^{d}\rightarrow \mathbb{R}^{d}$ be a Borel function and $X$ be an $\mathbb{R}^{d}$-valued random vector such that $\mathbb{E}[|H(z,X)|]< \infty$ for every $z \in \mathbb{R}^{d}$. Then set

$$
\forall z \in \mathbb{R}^{d}, \hspace*{0.3cm} h(z)=\mathbb{E}[H(z,X)].
$$ 

\noindent Suppose that the function $h$ is continuous and that $\mathcal{T}^{*}:={\left\{h=0\right\}}$ satisfies 

\begin{equation}
\forall z \in \mathbb{R}^{d}\mbox{ \textbackslash }\mathcal{T}^{*}, \forall z^{*} \in \mathcal{T}^{*}, \hspace*{0.3cm} \left\langle z-z^{*}, h(z)\right\rangle>0.
\label{MeanRev}
\end{equation}

\noindent Let $(\gamma_{n})_{n \geq 1}$ be a deterministic step sequence satisfying condition (A1). Suppose that 
\begin{equation} \forall z \in \mathbb{R}^{d},\ \ \mathbb{E}[|H(z,X)|^{2}] \leq C(1+|z|^{2})
\label{LinearGrowthAssump} 
\end{equation} 

\noindent (which implies that $|h(z)| \leq C'(1+|z|)$). 

\noindent Let $(X_{n})_{n \geq 1}$ be an i.i.d. sequence of random vectors having the distribution of $X$, let $z_{0}$ be a random vector independent of $(X_{n})_{n \geq 1}$ satisfying $\mathbb{E}[|z_{0}|]< \infty$, all defined on the same probability space $(\Omega, A, \mathbb{P})$. Let $\mathcal{F}_{n}:=\sigma(z_{0},X_{1},...,X_{n})$ and let $(r_{n})_{n\geq1}$ be an $\mathcal{F}_{n}$-measurable remainder sequence satisfying
\begin{equation}
\sum_{n} \gamma_{n} |r_{n}|^{2}< \infty.
\label{condremaindersequ}
\end{equation}

\noindent Then, the recursive  procedure defined for $n\geq1$ by 
$$
Z_{n}=Z_{n-1}-\gamma_{n}H(Z_{n-1},X_{n})+\gamma_{n}r_{n},
$$ 

\noindent satisfies: 
$$
\exists \ z_{\infty}, \hspace*{0.1cm} \mbox{ \textsl{ such that } } Z_{n}\stackrel{a.s.}{\longrightarrow} z_{\infty} \mbox{ and } z_{\infty} \in \mathcal{T}^{*} \mbox{ }a.s.
$$

\noindent The convergence also holds in $L^{p}(\mathbb{P}),p \in (0,2)$, where $L^{p}(\mathbb{P})$ denotes the set of all random vectors defined on $\left(\Omega,A,\mathbb{P}\right)$ such that $\mathbb{E}[\left|X\right|^{p}]^{\frac{1}{p}}<\infty$.
\end{thm}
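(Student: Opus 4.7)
The plan is to invoke the Robbins--Siegmund almost-sure convergence theorem for non-negative near-supermartingales applied to the squared distance $V_n := |Z_n - z^*|^2$ from a fixed $z^* \in \mathcal{T}^*$. Expanding the square directly from the recursion gives, $\mathbb{P}$-a.s.,
\begin{equation*}
V_n = V_{n-1} - 2\gamma_n \langle Z_{n-1} - z^*, H(Z_{n-1},X_n) - r_n \rangle + \gamma_n^2 \bigl|H(Z_{n-1},X_n) - r_n\bigr|^2.
\end{equation*}
Conditioning on $\mathcal{F}_{n-1}$ and using that $X_n$ is independent of $\mathcal{F}_{n-1}$ with $\mathbb{E}[H(z, X_n)] = h(z)$, the drift contributed by $H$ becomes $-2\gamma_n \langle Z_{n-1} - z^*, h(Z_{n-1})\rangle$, which is non-positive by the mean-reverting hypothesis~\eqref{MeanRev}.

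Next I would absorb the noise and the remainder into a Robbins--Siegmund-compatible inequality. The linear growth bound~\eqref{LinearGrowthAssump} yields $\gamma_n^2 \mathbb{E}[|H(Z_{n-1},X_n)|^2 \mid \mathcal{F}_{n-1}] \leq C'\gamma_n^2(1 + V_{n-1})$ with $C'$ depending on $|z^*|$. The remainder contributions (the cross term $-2\gamma_n\langle Z_{n-1} - z^*, r_n\rangle$, the quadratic $\gamma_n^2|r_n|^2$, and the mixed $\gamma_n^2\langle H, r_n\rangle$) are bounded by Young's inequality together with~\eqref{condremaindersequ} and~(A1), which in particular forces $\sum_n \gamma_n^2 |r_n|^2 < \infty$ a.s.\ since $\gamma_n \to 0$. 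One obtains non-negative, $\mathcal{F}_{n-1}$-measurable sequences $A_n, B_n$ with $\sum_n A_n < \infty$ and $\sum_n B_n < \infty$ almost surely such that
\begin{equation*}
\mathbb{E}[V_n \mid \mathcal{F}_{n-1}] \leq (1 + A_n) V_{n-1} + B_n - 2\gamma_n \langle Z_{n-1} - z^*, h(Z_{n-1})\rangle.
\end{equation*}
Robbins--Siegmund then simultaneously delivers the a.s.\ convergence of $V_n$ to a finite limit $V_\infty(z^*)$ and the a.s.\ finiteness of $\sum_n \gamma_n \langle Z_{n-1} - z^*, h(Z_{n-1})\rangle$.

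From these two facts the sequence $(Z_n)$ is a.s.\ bounded, hence admits cluster points; combined with $\sum_n \gamma_n = +\infty$, the continuity of $h$, and the strict mean-reverting inequality, every cluster point must lie in $\mathcal{T}^*$. Turning this subsequential statement into full convergence is the genuinely delicate point when $\mathcal{T}^*$ is not a singleton---the regime of interest here, since the set of $\textnormal{VaR}_\alpha$'s is an interval whenever the distribution function has a flat piece at level $\alpha$. The standard remedy is to run the previous analysis simultaneously for every $z^*$ in a countable dense subset $\mathcal{D}$ of $\mathcal{T}^*$: on the common full-measure event, any two cluster points $z_\infty, z_\infty'$ satisfy $|z_\infty - z^*|^2 = V_\infty(z^*) = |z_\infty' - z^*|^2$ for every $z^* \in \mathcal{D}$, hence by density and continuity for every $z^* \in \mathcal{T}^*$; specialising to $z^* = z_\infty \in \mathcal{T}^*$ then forces $z_\infty = z_\infty'$, so the whole sequence converges.

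Finally, the $L^p$ convergence for $p \in (0,2)$ follows from $L^2$-boundedness of $(Z_n)$, obtained by taking expectations in the Robbins--Siegmund inequality (the deterministic part of $A_n, B_n$ carries through via a Gronwall-type iteration, while the random part coming from $r_n$ is controlled using its integrability), which provides uniform integrability of $(|Z_n - z_\infty|^p)_n$ and hence $L^p$-convergence by Vitali's theorem. The two main technical obstacles I expect are the handling of the random $\mathcal{F}_n$-measurable remainder $r_n$---since it may depend on $X_n$, the cross term $\gamma_n^2\langle H, r_n\rangle$ requires careful Young/Cauchy--Schwarz splittings tuned against the summability of $\gamma_n |r_n|^2$---and the passage from subsequential to full convergence, where the separability of $\mathcal{T}^* \subset \mathbb{R}^d$ is essential to obtain a single $\mathbb{P}$-null exceptional set.
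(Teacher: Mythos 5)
First, note that the paper does not actually prove this theorem: it states it and refers to \cite{frikha} for the proof, so there is no in-paper argument to compare yours against line by line. Your route --- Robbins--Siegmund applied to $V_n=|Z_n-z^*|^2$, summability of the drift series, cluster points in $\mathcal{T}^*$ via $\sum_n\gamma_n=\infty$ and continuity of $h$, a countable dense subset of $\mathcal{T}^*$ to upgrade subsequential to full convergence, and uniform integrability for the $L^p$ statement --- is the standard proof of such variants and is structurally sound; the dense-subset argument in particular is exactly the right device for a non-singleton $\mathcal{T}^*$, and the $L^p$ step is fine once $\sup_n\mathbb{E}[|Z_n|^2]<\infty$ is secured.

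There is, however, one genuine gap, and it sits precisely at the point you defer to ``Young's inequality'': the cross term $2\gamma_n\langle Z_{n-1}-z^*,r_n\rangle$. Under the stated hypothesis $\sum_n\gamma_n|r_n|^2<\infty$ alone, no Young/Cauchy--Schwarz splitting puts this term into Robbins--Siegmund form. The two natural splittings give either
$2\gamma_n|Z_{n-1}-z^*|\,|r_n|\le \gamma_n V_{n-1}+\gamma_n|r_n|^2$, whose multiplicative coefficient $A_n=\gamma_n$ is \emph{not} summable by (A1), or
$2\gamma_n|Z_{n-1}-z^*|\,|r_n|\le \gamma_n^2V_{n-1}+|r_n|^2$, whose additive term $B_n=|r_n|^2$ is not summable under the hypothesis; interpolating the exponents does not help. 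What does close the argument is the elementary bound $2\gamma_n|u|\,|r_n|\le \gamma_n|r_n|(1+|u|^2)$, but this requires $\sum_n\gamma_n|r_n|<\infty$ rather than $\sum_n\gamma_n|r_n|^2<\infty$ (and $\sum_n\gamma_n|r_n|^2<\infty$ does not imply $\sum_n\gamma_n|r_n|<\infty$, since $\sum_n\gamma_n=\infty$). So either you must strengthen the working hypothesis on $(r_n)$ --- harmless in the paper's application, where $|r_n|\le|\alpha_n-\alpha|/(1-\alpha)^2$ is deterministic and eventually zero --- or you need a genuinely different treatment of that term (e.g.\ absorbing it into the negative drift, which requires quantitative mean reversion you have not assumed). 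A secondary, smaller point: for the $L^2$-boundedness used in your last step you take expectations in the Robbins--Siegmund inequality, which needs $\mathbb{E}\bigl[\sum_n\gamma_n|r_n|^2\bigr]<\infty$ (or a deterministic dominating sequence), not merely a.s.\ finiteness of the sum; you allude to ``integrability'' of $r_n$ but this should be made an explicit hypothesis or derived from the application.
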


\noindent \textbf{Remark:} It is in fact a slight variant (see e.g. \cite{frikha}) of the regular RM Theorem since $Z_{n}$ converges to a random vector having its value in the set $\left\{{h=0}\right\}$ even if $\left\{{h=0}\right\}$ is not reduced to a singleton or a finite set. The remainder sequence in the above theorem plays a crucial role when we will (slightly) modify the first IS procedure to improve its efficiency. \newline

The second step concerns procedure for the numerical computation of the $\Psi\mbox{-}\textnormal{CVaR}_{\alpha}$. A naive idea is to compute the function $V_{\Psi}$ at the point $\xi^{*}_{\alpha}$:

$$
\Psi\mbox{-}\textnormal{CVaR}_{\alpha}=V_{\Psi}(\xi^{*}_{\alpha})=\mathbb{E}[w(\xi^{*}_{\alpha},X)]
$$

\noindent using a regular Monte Carlo simulation,
\begin{eqnarray}
\frac{1}{n}\sum_{k=0}^{n-1}w(\xi^{*}_{\alpha},X_{k+1}).
\label{RMCVaR1}
\end{eqnarray}

\noindent However, we first need to get from~\eqref{RMVaR} a good approximate of $\xi^{*}_{\alpha}$ and subsequently to use another sample of the distribution $X$. A natural idea is to devise an adaptive \textsl{companion procedure} of the above quantile search algorithm by replacing $\xi^{*}_{\alpha}$ in~\eqref{RMCVaR1} by its approximation at step $k$, namely
\begin{eqnarray}
C_{n}=\frac{1}{n}\sum_{k=0}^{n-1}w(\xi_{k},X_{k+1}), \ n\geq1, \ C_{0}=0.
\label{MoyEmpCVaR}
\end{eqnarray}

\noindent Hence, $(C_{n})_{n\geq 0}$ is the sequence of empirical means of the non i.i.d. sequence $(w(\xi_{k},X_{k+1}))_{k\geq1}$, which can be written recursively:
\begin{equation}{}
    C_{n}=C_{n-1}-\frac{1}{n}H_{2}\left(\xi_{n-1},C_{n-1},X_{n}\right),\ n\geq1,
\label{RMCVaR2}
\end{equation}

\noindent where $H_{2}\left(\xi,c,x\right):=c-w(\xi,x).$ 

 At this stage, we are facing two procedures $(\xi_{n},C_{n})$ with different steps. This may appear not very consistent or at least natural. A second modification to the original Monte Carlo procedure~\eqref{RMCVaR2} consists in considering a general step $\beta_{n}$ satisfying condition (A1) instead of $\frac{1}{n}$ (with in mind the possibility to set $\beta_{n}=\gamma_{n}$ eventually). This leads to:
\begin{equation}{}
    C_{n}=C_{n-1}-\beta_{n}H_{2}\left(\xi_{n-1},C_{n-1},X_{n}\right),\ n\geq1.
\label{RMCVaR}
\end{equation}

\noindent In order to prove the $a.s.$ convergence of $(C_{n})_{n\geq1}$ toward $C^{*}_{\alpha}$, we set for convenience $\beta_{0}:= \sup_{n\geq1} \beta_{n} +1$. Then, one defines recursively a sequence $(\Delta_{n})_{n\geq1}$ by 

$$
\Delta_{n+1} = \Delta_{n} \frac{\beta_{n+1}}{\beta_{n}} \frac{\beta_{0}}{\beta_{0}-\beta_{n+1}}, \mbox{ } n \geq 0, \ \Delta_{0}=1.
$$ 

\noindent Elementary computations show by induction that 
\begin{equation}
\beta_{n}=\beta_{0}\frac{\Delta_{n}}{S_{n}}, \mbox{ } n \geq 0, \mbox{ with } S_{n}=\sum_{k=0}^{n} \Delta_{k}. 
\label{FormulePas}
\end{equation}

\noindent Furthermore, it follows from~\eqref{FormulePas} that for every $n \geq 1$ 
$$
\log(S_{n})-\log(S_{n-1})=-\log\left(1-\frac{\Delta_{n}}{S_{n}}\right) \geq \frac{\Delta_{n}}{S_{n}}= \frac{\beta_{n}}{\beta_{0}}.
$$
 
\noindent Consequently, $$\log(S_{n}) \geq \frac{1}{\beta_{0}} \sum_{k=1}^{n} \beta_{k}$$ which implies that $\lim_{n} S_{n} =+\infty.$ 

\noindent Now using~\eqref{RMCVaR} and~\eqref{FormulePas}, one gets for every $n \geq 1$ 
$$ 
S_{n}C_{n}=S_{n-1}C_{n-1}+\Delta_{n}\left(\Delta N_{n+1}+V_{\Psi}(\xi_{n})\right)
$$

\noindent where, $\Delta N_{n}:=w(\xi_{n-1},X_{n})-V_{\Psi}(\xi_{n-1})$, $n \geq 1$, define a martingale increments sequence with respect to the natural filtration of the algorithm $\mathcal{F}_{n}:=\sigma(\xi_{0},X_{1},\cdots,X_{n}),\mbox{ }n\geq0$. Consequently,

$$C_{n}=\frac{1}{S_{n}}\left(\sum^{n-1}_{k=0} \Delta_{k+1} \Delta N_{k+1} +\sum^{n-1}_{k=0} \Delta_{k+1}V_{\Psi}(\xi_{k})\right).$$ 

\noindent The second term in the right hand side of the above equality converges to $V_{\Psi}(\xi^{*}_{\alpha})=\Psi\mbox{-}\textnormal{CVaR}_{\alpha}(\varphi(X))$ owing to the continuity of $V_{\Psi}$ at $\xi^{*}_{\alpha}$ and Cesaro's Lemma. 

\noindent The convergence to 0 of the first term will follow from the $a.s.$ convergence of the series
 
$$
N_{n}^{\beta}:=\sum_{k=1}^{n} \beta_{k} \Delta N_{k}, \mbox { } n \geq 1
$$
 
\noindent by the Kronecker Lemma since $\beta_{n}=\beta_{0} \frac{\Delta_{n}}{S_{n}}$. The sequence $(N_{n}^{\beta})_{n\geq1}$ is an $\mathcal{F}_{n}$-martingale since the $\Delta N_{k}$'s are martingale increments and
$$
\mathbb{E}\left[(\Delta N_{n})^{2} | \mathcal{F}_{n-1}\right] \leq \frac{1}{(1-\alpha)^{2}} \mathbb{E}\left[\left(\Psi\left(\varphi\left(X\right)\right)-\xi\right)^{2}\right]_{|\xi=\xi_{n-1}}.
$$

\noindent Assumption $(A2)_{1}$ and the $a.s.$ convergence of $\xi_{k}$ toward $\xi^{*}_{\alpha}$ imply that

$$
\sup_{n\geq1} \mathbb{E}[(\Delta N_{n})^{2}|\mathcal{F}_{n-1}]< \infty \mbox{  } \ \ a.s. 
$$

\noindent Consequently, assumption (A1) implies 

$$
\langle N^{\beta}\rangle_{\infty}=\sum_{n \geq 1} \beta_{n}^{2}\mathbb{E}[(\Delta N_{n})^{2}|\mathcal{F}_{n-1}]< \infty
$$

\noindent which in term yields the $a.s.$ convergence of $(N_{n}^{\beta})_{n \geq 1}$, so that $C_{n}\stackrel{a.s.}{\longrightarrow} \Psi\mbox{-}CVaR_{\alpha}\left(\varphi(X)\right).$

\noindent The resulting algorithm reads as for $n\geq1$:
\begin{equation}{}
\left\{
\begin{array}{l}
    \xi_{n}= \xi_{n-1}-\gamma_{n}H_{1}\left(\xi_{n-1},X_{n}\right), \ \ \xi_{0}\in L^{1}(\mathbb{P}),  \vspace*{0.2cm} \\
    C_{n}=C_{n-1}-\beta_{n}H_{2}\left(\xi_{n-1},C_{n-1},X_{n}\right), \ \ C_{0}=0,  
\end{array}
\label{RMVaRCVaR}
\right.
\end{equation}  

\noindent and converges under $(A1)$ and $(A2)_{1}$.

The question of the joint weak convergence rate of $\left(\xi_{n},C_{n}\right)$ is not trivial owing to the coupling of the two procedures. The case of two different step scales refers to the general framework of two-time-scale stochastic approximation algorithms. Several results have been established by Borkar in \cite{borkar}, Konda and Tsitsiklis in \cite{konda} but the more relevant in our case are those of Mokkadem and Pelletier in \cite{mokkadem}. 
\noindent The weak convergence rate of $(\xi_{n})_{n\geq1}$ is ruled by the CLT for ``regular'' (single-time scale) stochastic approximation algorithms (we refer to Kushner and Clark in \cite{kushnera}, M\'{e}tivier and Priouret in \cite{benv}, Duflo in \cite{duflob} among others). In order to achieve the best asymptotic rate of convergence, one ought to set $\gamma_{n}=\frac{\gamma_{0}}{n}$ where the choice of $\gamma_{0}$ depends on the value of the density $f_{\varphi(X)}$ of $\varphi(X)$ at $\xi^{*}_{\alpha}$, which is unknown. To circumvent the difficulties induced by the specification of $\gamma_{0}$, which are classical in this field, we are led to modify again our algorithm by introducing the averaging principle independently introduced by Ruppert~\cite{ruppert} and Polyak~\cite{polyak} and then widely investigated by several authors. It works both with two-time or single-time scale steps and leads to asymptotically efficient procedures, $i.e.$, satisfying a CLT at the optimal rate $\sqrt{n}$ and minimal variance (see also \cite{mokkadem}). See also a variant based on a gliding window developed in \cite{lelong}. Our numerical examples indicate that the averaged one-time-scale procedure provides less variance during the first iterations than the averaged procedure of the two-time-scale algorithm. Finally, we set $\gamma_{n}\equiv \beta_{n}$ in~\eqref{RMVaRCVaR} so that, the VaR-CVaR algorithm can be written in a more synthetic way by setting $Z_{n}=(\xi_{n},C_{n})$ and for $n\geq1$: 

\begin{equation}
Z_{n}=Z_{n-1}-\gamma_{n}H(Z_{n-1},X_{n}), \ Z_{0}=\left(\xi_{0},C_{0}\right), \ \xi_{0} \in L^{1}(\mathbb{P}),
\label{VaRCVaRequalsteps}
\end{equation}

\noindent where $H(z,x):=(H_{1}(\xi,x),H_{2}(\xi,C,x))$.
\noindent Throughout the rest of this section, we assume that the distribution $\varphi(X)$ has a positive probability density $f_{\varphi(X)}$ on its support. As a consequence the $VaR_{\alpha}(\varphi(X))$ is unique so that the procedure algorithm $Z_n$ converges $a.s.$ to its single target $\left(\textnormal{VaR}_{\alpha}(\varphi(X)), \Psi\mbox{-}\textnormal{CVaR}_\alpha(\varphi(X))\right)$. Thus, the Cesaro mean of the procedure 
$$
\bar Z_n :=\frac {Z_0+\cdots+Z_{n-1}}{n}, \ \ n\geq 1,
$$
 
\noindent where $Z_{n}$ is defined by~\eqref{VaRCVaRequalsteps}, converges $a.s.$ to the same target. The Ruppert and Polyak's Averaging Principle says that an appropriate choice of the step yields for free the smallest possible asymptotic variance. We recall below this result (following a version established in \cite{duflob}, see \cite{duflob} (p.169) for a proof).

\begin{thm}{(Ruppert and Polyak's Averaging Principle)}
Suppose that the $\mathbb{R}^{d}$-sequence $(Z_{n})_{n\geq 0}$ is defined recursively by
$$
Z_{n}=Z_{n-1}-\gamma_{n}\left(h(Z_{n-1})+\epsilon_{n}+r_{n}\right)
$$

\noindent where $h$ is a Borel function. Let $\mathbb{F}:=(\mathcal{F}_{n})_{n\geq0}$ be the natural filtration of the algorithm, $i.e.$ such that the sequence $(\epsilon_{n})_{n \geq1}$ and $(r_{n})_{n \geq1}$ is $\mathbb{F}-$adapted. Suppose that $h$ is $\mathcal{C}^{1}$ in the neighborhood of a zero $z^{*}$ of $h$ and that $M=Dh(z^{*})$ is  a uniformly repulsive matrix (all its eigenvalues have positive real parts) and that $(\epsilon_{n})_{n \geq 1}$ satisfies
\begin{equation}
\exists \  C >0, \mbox{ such that a.s. }\left\{ \begin{array}{l}

(i) \mbox{  }\mathbb{E}[\epsilon_{n+1}|\mathcal{F}_{n}] \mbox{\bf 1}_{\left\{||Z_{n}-z^{*}|| \leq C \right\}}=0, \vspace*{0.2cm}  \\ 

(ii) \mbox{  } \exists b >2, \mbox{ } \sup_{n} \mathbb{E}[||\epsilon_{n+1}||^{b}|\mathcal{F}_{n}] \mbox{ \bf 1}_{\left\{||Z_{n}-z^{*}|| \leq C \right\}} < +\infty, \vspace*{0.2cm} \\ 

(iii) \ \mathbb{E}\left[(\gamma_{n-1})^{-1}\left|r_{n}\right|^{2}\mbox{ \bf 1}_{\left\{||Z_{n}-z^{*}||\leq C\right\}}\right] \rightarrow 0, \vspace*{0.2cm} \\ 

(iv) \ \exists \ \Gamma \in \mathcal{S}^{+}(d,\mathbb{R}) \mbox{  such that }\mathbb{E}\left[\epsilon_{n+1}\epsilon_{n+1}^{T}|\mathcal{F}_{n}\right]\stackrel{a.s.}{\longrightarrow} \Gamma .  

\end{array}
\right.
\label{hypii}
\end{equation}

\noindent Set $\gamma_{n}=\frac{\gamma_{1}}{n^{a}}$ with $\frac{1}{2}< a < 1$,
and 

$$
\bar{Z}_{n+1}:=\frac{Z_{0}+...+Z_{n}}{n+1}=\bar{Z}_{n}-\frac{1}{n+1}(\bar{Z}_{n}-Z_{n}),
\mbox{ } n \geq 0.
$$ 

\noindent Then, on the set of convergence
${\left\{Z_{n}\rightarrow z^{*}\right\}}$:

\begin{equation*}
\sqrt{n}\left(\bar{Z}_{n}-z^{*}\right)\stackrel{\mathcal{L}}{\rightarrow}\mathcal{N}\left(0,M^{-1}\Gamma(M^{-1})^{T}\right)\hspace*{.5cm} \mbox{ as } n \rightarrow +\infty,
\label{RuppertPolyakCLT}
\end{equation*}

\noindent where $(M^{-1})^{T}$ denotes the transpose of the matrix $M^{-1}$.
\end{thm}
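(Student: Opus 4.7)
The plan is the classical martingale-CLT strategy for Polyak-Ruppert averaging: linearize $h$ around $z^*$, rewrite each increment $(Z_{k-1}-z^*)$ (up to multiplication by $M^{-1}$) as noise plus a telescoping term plus a small remainder, sum, and conclude via a martingale CLT. Throughout, one works on the event of convergence $\{Z_n\to z^*\}$, on which eventually $|Z_n-z^*|\le C$, so the localizing indicators in (i)--(iv) may be dropped.

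By the $\mathcal{C}^1$ smoothness of $h$ at $z^*$, write $h(Z_{k-1}) = M(Z_{k-1}-z^*) + \rho_k$ with $|\rho_k| = o(|Z_{k-1}-z^*|)$ on $\{Z_n\to z^*\}$. Rearranging the recursion yields
$$M(Z_{k-1}-z^*) = -\gamma_k^{-1}(Z_k-Z_{k-1}) - \epsilon_k - r_k - \rho_k.$$
Summing from $k=1$ to $n$ and performing Abel summation on the telescopic part produces
$$n M(\bar Z_n - z^*) = \gamma_1^{-1}(Z_0-z^*) - \gamma_n^{-1}(Z_n-z^*) + \sum_{k=1}^{n-1}(\gamma_{k+1}^{-1}-\gamma_k^{-1})(Z_k-z^*) - \sum_{k=1}^n(\epsilon_k+r_k+\rho_k).$$
Since $M$ is invertible (all eigenvalues have positive real parts), multiplying by $M^{-1}$ and normalizing by $\sqrt n$ isolates
$$\sqrt n\,(\bar Z_n - z^*) = -M^{-1} n^{-1/2}\sum_{k=1}^n \epsilon_k + M^{-1}\mathcal{R}_n,$$
where $\mathcal{R}_n$ collects all other contributions. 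The Gaussian limit for the leading term is then a direct application of the martingale CLT: (i) makes $(\epsilon_k)$ an $\mathbb{F}$-martingale difference sequence, (iv) furnishes the quadratic-variation convergence $n^{-1}\sum_{k=1}^n \mathbb{E}[\epsilon_k \epsilon_k^T\mid\mathcal{F}_{k-1}] \stackrel{a.s.}{\longrightarrow} \Gamma$, and (ii) with $b>2$ supplies a conditional Lyapunov condition. Hence $n^{-1/2}\sum_{k=1}^n \epsilon_k \stackrel{\mathcal{L}}{\longrightarrow} \mathcal{N}(0,\Gamma)$, so that a Slutsky argument combined with $\mathcal{R}_n = o_\mathbb{P}(1)$ delivers the asserted covariance $M^{-1}\Gamma(M^{-1})^T$.

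The principal obstacle is showing $\mathcal{R}_n = o_\mathbb{P}(1)$, which boils down to an a priori stability estimate of the form $\mathbb{E}[|Z_n-z^*|^2\mathbf{1}_{\{|Z_n-z^*|\le C\}}] = O(\gamma_n)$ for the non-averaged recursion with $\gamma_n = \gamma_1 n^{-a}$, $a\in(1/2,1)$. This bound is obtained by squaring the recursion, using (i)--(ii) to handle the martingale part, (iii) to absorb $r_k$, and closing with a discrete Gronwall argument that exploits the strict uniform repulsivity of $M$ near $z^*$; it is the genuinely delicate technical core. Once it is in hand the residuals are benign: the boundary term $\gamma_n^{-1}(Z_n-z^*)/\sqrt n$ is $O_\mathbb{P}(n^{(a-1)/2})=o_\mathbb{P}(1)$ since $a<1$; the Abel sum, whose weights satisfy $\gamma_{k+1}^{-1}-\gamma_k^{-1}\asymp k^{a-1}$, combines with the $L^2$-bound $\|Z_k-z^*\|_2=O(k^{-a/2})$ to give an order-$n^{a/2}=o(\sqrt n)$ contribution; condition (iii) together with Kronecker's lemma handles $n^{-1/2}\sum r_k$; and $|\rho_k|=o(|Z_{k-1}-z^*|)$ paired with the stability bound makes $n^{-1/2}\sum\rho_k=o_\mathbb{P}(1)$. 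Assembling these estimates and invoking Slutsky's lemma completes the argument on $\{Z_n\to z^*\}$.
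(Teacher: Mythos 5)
The paper itself does not prove this theorem: it is quoted from Duflo's book with an explicit page reference (``see \cite{duflob} (p.169) for a proof''), so there is no in-paper argument to compare against line by line. That said, your skeleton --- linearize $h$ at $z^{*}$, solve the recursion for $M(Z_{k-1}-z^{*})$, sum, Abel-transform the telescoping part, apply a martingale CLT to $n^{-1/2}\sum_{k\leq n}\epsilon_{k}$, and absorb the residuals using the a priori rate $\mathbb{E}\left[\|Z_{n}-z^{*}\|^{2}\right]=O(\gamma_{n})$ --- is exactly the classical Polyak--Juditsky/Duflo argument, and your estimates for the boundary term $\gamma_{n}^{-1}(Z_{n}-z^{*})/\sqrt{n}$ and for the Abel sum are correct.

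There is, however, a genuine gap in the residual analysis, concentrated on the Taylor remainder $\rho_{k}$. From $h\in\mathcal{C}^{1}$ you only get $|\rho_{k}|=o(|Z_{k-1}-z^{*}|)$, hence at best $\mathbb{E}|\rho_{k}|=o(\gamma_{k}^{1/2})=o(k^{-a/2})$, and then $n^{-1/2}\sum_{k\leq n}\mathbb{E}|\rho_{k}|=o\!\left(n^{(1-a)/2}\right)$, which \emph{diverges} for every $a<1$. So the asserted $n^{-1/2}\sum_{k}\rho_{k}=o_{\mathbb{P}}(1)$ does not follow from the bounds you invoke. The classical proof needs the quadratic control $|h(z)-M(z-z^{*})|=O(|z-z^{*}|^{2})$ near $z^{*}$ (slightly more than $\mathcal{C}^{1}$, e.g. a locally Lipschitz differential, plus localization), which yields $\mathbb{E}|\rho_{k}|=O(\gamma_{k})=O(k^{-a})$ and hence $n^{-1/2}\sum_{k\leq n}\mathbb{E}|\rho_{k}|=O(n^{1/2-a})\rightarrow 0$ \emph{precisely because} $a>\frac{1}{2}$. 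The fact that your residual estimates never use the hypothesis $a>\frac{1}{2}$ is the symptom: that hypothesis exists exactly to kill this term. A similar soft spot is the claim that condition $(iii)$ plus Kronecker's lemma handles $n^{-1/2}\sum_{k}r_{k}$: Kronecker would require the convergence of $\sum_{k}k^{-1/2}r_{k}$, which does not follow from $\|r_{k}\|_{L^{2}}=o(\gamma_{k}^{1/2})=o(k^{-a/2})$ alone when $a<1$; one needs either the stronger normalization $n\,\mathbb{E}|r_{n}|^{2}\rightarrow 0$ or additional structure on $(r_{n})$. Finally, the $L^{2}$ stability bound that you correctly identify as the technical core is asserted rather than proved; black-boxing it is defensible in a sketch, but the two points above are places where the argument as written actually fails rather than merely omits detail.
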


\noindent To apply this theorem to our framework we are led to compute the Cesaro means of both components, namely for $n\geq1$
\begin{eqnarray} {}\left\{
\begin{array}{l}
 \overline{\xi}_{n}:=\frac{1}{n} \sum_{k=1}^{n} \xi_{k}=\overline{\xi}_{n-1}-\frac{1}{n}(\overline{\xi}_{n-1}-\xi_{n}), \vspace*{0.2cm} \\
 \overline{C}_{n}:=\frac{1}{n} \sum_{k=1}^{n} C_{k}=\overline{C}_{n-1}-\frac{1}{n}(\overline{C}_{n-1}-C_{n}), 
 \end{array}
\label{AveRMVaRCVaR}
\right.
\end{eqnarray}

\noindent where $(\xi_{k},C_{k})$, $k\geq0$ is defined by~\eqref{VaRCVaRequalsteps}. In the following theorem, we provide the convergence rate of the couple $\bar{Z}_{n}:=\left(\overline{\xi}_{n},\overline{C}_{n}\right)$.

\begin{thm}{(Convergence rate of the VaR-CVaR procedure)}. Suppose $(A2)_{a}$ holds for some $a>1$, the density function $f_{\varphi(X)}$ of
$\varphi(X)$ is continuous, strictly positive at $\xi^{*}_{\alpha}$. If the step sequence is $\gamma_{n}=\frac{\gamma_{1}}{n^{a}}$ with $ \frac{1}{2} < a <
1$ and $\gamma_{1}>0$ then

\begin{eqnarray*}
\sqrt{n}\left(\bar{Z}_{n}-z^{*}\right)\stackrel{\mathcal{L}}{\longrightarrow}
\mathcal{N}\left(0,\Sigma\right) \hspace*{.5cm} \mbox{ as } n \rightarrow +\infty 
\label{CLTVaRCVaR} 
\end{eqnarray*}

\noindent where the asymptotic covariance matrix $\Sigma$ is given by
\begin{equation}
\begin{pmatrix}
\frac{\alpha(1-\alpha)}{f_{\varphi(X)}^{2}(\xi^{*}_{\alpha})}& \frac{\alpha}{(1-\alpha)f_{\varphi(X)}(\xi^{*}_{\alpha})} \mathbb{E}\left[ \left(\Psi(\varphi(X))-\xi^{*}_{\alpha}\right) \mbox{\bf 1}_{\left\{ \varphi(X) \geq \xi^{*}_{\alpha}\right\} } \right]  \vspace*{0.2cm} \\
\frac{\alpha}{(1-\alpha)f_{\varphi(X)}(\xi^{*}_{\alpha})}\mathbb{E}\left[ \left(\Psi(\varphi(X))-\xi^{*}_{\alpha}\right) \mbox{\bf 1}_{\left\{\varphi(X) \geq \xi^{*}_{\alpha}\right\}}\right] & \frac{1}{(1-\alpha)^{2}}\mbox{\textnormal{Var}}\left(\left(\Psi(\varphi(X))-\xi^{*}_{\alpha}\right) \mbox{\bf 1}_{\left\{ \varphi(X) \geq \xi^{*}_{\alpha}\right\} }\right)
\end{pmatrix}.
\label{varasympt}
\end{equation}
\end{thm}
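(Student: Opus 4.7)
The strategy is to invoke the Ruppert-Polyak averaging principle (Theorem 2.3) applied to the joint procedure $Z_n = (\xi_n, C_n)$ defined by~\eqref{VaRCVaRequalsteps}. To this end, rewrite
$$Z_n = Z_{n-1} - \gamma_n\bigl(h(Z_{n-1}) + \epsilon_n\bigr), \qquad n \geq 1,$$
with $h(\xi, C) := \bigl(V'(\xi),\, C - V_\Psi(\xi)\bigr)$ and $\epsilon_n := H(Z_{n-1}, X_n) - h(Z_{n-1})$. The sequence $(\epsilon_n)_{n\geq 1}$ is a martingale-difference sequence with respect to $\mathcal{F}_n := \sigma(Z_0, X_1, \ldots, X_n)$, and the remainder is zero. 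Because $f_{\varphi(X)}$ is strictly positive at $\xi^{*}_{\alpha}$ the VaR is unique, and Theorem 2.2 together with the CVaR argument of the previous subsection yields the a.s.\ convergence $Z_n \to z^{*} := (\xi^{*}_{\alpha}, C^{*}_{\alpha})$, so the ``set of convergence'' in Theorem 2.3 has full probability.

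Next one verifies the four conditions of Theorem 2.3. Conditions (i) and (iii) are immediate: $(\epsilon_n)$ is a genuine martingale difference, and $r_n \equiv 0$. A direct computation based on~\eqref{Vfunctions} yields the lower-triangular Jacobian
$$M := Dh(z^{*}) = \begin{pmatrix} \dfrac{f_{\varphi(X)}(\xi^{*}_{\alpha})}{1-\alpha} & 0 \\[4pt] -V_\Psi'(\xi^{*}_{\alpha}) & 1 \end{pmatrix},$$
whose eigenvalues are the strictly positive diagonal entries, so $M$ is uniformly repulsive. Condition~(ii) is where the hypothesis $(A2)_{a}$ with $a>1$ enters crucially: it supplies $\Psi(\varphi(X)) \in L^{2a}(\mathbb{P})$ with $b := 2a > 2$, so the explicit expressions for $H_1$ and $H_2$ give $\sup_n \mathbb{E}[\|\epsilon_{n+1}\|^{b}\,|\,\mathcal{F}_n]\,\mbox{\bf 1}_{\{\|Z_n - z^{*}\| \leq C\}} < +\infty$. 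For condition~(iv) one must establish the a.s.\ convergence $\mathbb{E}[\epsilon_{n+1}\epsilon_{n+1}^T\,|\,\mathcal{F}_n] \to \Gamma$; this follows from $\xi_n \to \xi^{*}_{\alpha}$ a.s.\ together with the continuity at $\xi^{*}_{\alpha}$ of the auxiliary maps $\xi \mapsto \mathbb{P}(\varphi(X) \geq \xi)$ and $\xi \mapsto \mathbb{E}[(\Psi(\varphi(X)) - \xi)^{2}\mbox{\bf 1}_{\{\varphi(X) \geq \xi\}}]$, the required uniform integrability being again a consequence of $(A2)_{a}$.

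Theorem 2.3 then delivers $\sqrt{n}(\bar{Z}_n - z^{*}) \stackrel{\mathcal{L}}{\longrightarrow} \mathcal{N}\bigl(0, M^{-1}\Gamma (M^{-1})^T\bigr)$, and it only remains to unfold this expression. The entries of $\Gamma$ are obtained by computing the conditional covariance of $\epsilon_n$ at $z^{*}$: the identity $\mathbb{P}(\varphi(X) \geq \xi^{*}_{\alpha}) = 1-\alpha$ gives the $(1,1)$ entry $\alpha/(1-\alpha)$; the $(2,2)$ entry is $(1-\alpha)^{-2}\mbox{Var}((\Psi(\varphi(X)) - \xi^{*}_{\alpha})\mbox{\bf 1}_{\{\varphi(X) \geq \xi^{*}_{\alpha}\}})$; and the algebraic identity $\mbox{\bf 1}_{\{\varphi(X) \geq \xi^{*}_{\alpha}\}}\cdot (\Psi(\varphi(X)) - \xi^{*}_{\alpha})\mbox{\bf 1}_{\{\varphi(X) \geq \xi^{*}_{\alpha}\}} = (\Psi(\varphi(X)) - \xi^{*}_{\alpha})\mbox{\bf 1}_{\{\varphi(X) \geq \xi^{*}_{\alpha}\}}$ collapses the off-diagonal entry to $\alpha(1-\alpha)^{-2}\mathbb{E}[(\Psi(\varphi(X)) - \xi^{*}_{\alpha})\mbox{\bf 1}_{\{\varphi(X) \geq \xi^{*}_{\alpha}\}}]$. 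Inverting the triangular $M$ and multiplying produces $\Sigma$ as in~\eqref{varasympt}. The main obstacle is the verification of conditions~(ii) and~(iv): both force the strengthening from $(A2)_{1}$ (sufficient for the a.s.\ convergence) to $(A2)_{a}$ with $a > 1$, needed to control the higher-order moments of $\epsilon_n$ and to identify the almost-sure limit of its conditional quadratic variation.
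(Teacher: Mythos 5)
Your overall strategy coincides with the paper's: rewrite \eqref{VaRCVaRequalsteps} as $Z_n=Z_{n-1}-\gamma_n(h(Z_{n-1})+\epsilon_n)$ with a martingale-increment noise, check (i)--(iv) of the Ruppert--Polyak theorem (with $b=2a>2$ coming from $(A2)_a$, and (iv) from the a.s.\ convergence of $\xi_n$ plus continuity of the relevant maps), identify $\Gamma$, and conclude $\Sigma=M^{-1}\Gamma(M^{-1})^{T}$. The computation of the three entries of $\Gamma$ is the same as in the paper and is correct.

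There is, however, one genuine gap in your last step. You record the Jacobian as the lower-triangular matrix with $(2,1)$ entry $-V_\Psi'(\xi^{*}_{\alpha})$ and then assert that ``inverting the triangular $M$ and multiplying produces $\Sigma$.'' That conclusion only holds if the $(2,1)$ entry vanishes: for $M=\bigl(\begin{smallmatrix} m & 0\\ q & 1\end{smallmatrix}\bigr)$ one gets $\bigl(M^{-1}\Gamma (M^{-1})^{T}\bigr)_{1,2}=\Gamma_{12}/m-q\,\Gamma_{11}/m^{2}$ and an analogous contamination of the $(2,2)$ entry, so the stated $\Sigma$ is obtained only when $q=0$. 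You never verify this. The paper does address it explicitly: it computes the $(2,1)$ entry as $\mathbb{E}\bigl[(\partial_\xi w)(\xi,X)\bigr]_{|\xi=\xi^{*}_{\alpha}}=1-\tfrac{1}{1-\alpha}\mathbb{P}(\varphi(X)\geq\xi^{*}_{\alpha})=0$, i.e.\ it differentiates under the expectation, so that $M$ is in fact diagonal. Note that this point is not innocuous for general $\Psi$: the map $\xi\mapsto w(\xi,x)$ has a jump of size $-\tfrac{1}{1-\alpha}\bigl(\Psi(\varphi(x))-\varphi(x)\bigr)$ at $\xi=\varphi(x)$, and a direct differentiation of $V_\Psi(\xi)=\xi+\tfrac{1}{1-\alpha}\int_\xi^{+\infty}(\Psi(y)-\xi)f_{\varphi(X)}(y)\,dy$ gives $-V_\Psi'(\xi^{*}_{\alpha})=\tfrac{1}{1-\alpha}\bigl(\Psi(\xi^{*}_{\alpha})-\xi^{*}_{\alpha}\bigr)f_{\varphi(X)}(\xi^{*}_{\alpha})$, which is zero when $\Psi\equiv Id$ (the case of the ordinary CVaR, where $w(\cdot,x)$ is Lipschitz and the interchange is legitimate) but not in general. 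So to complete your argument you must either prove that this entry vanishes in the regime you work in, or carry the nonzero $q$ through the product $M^{-1}\Gamma(M^{-1})^{T}$, in which case you would not recover \eqref{varasympt} as stated.
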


\begin{proof}[\bf Proof.]
First, the procedure~\eqref{VaRCVaRequalsteps} can be written as for $n\geq1$
\begin{eqnarray}
Z_{n}=Z_{n-1}-\gamma_{n}\left(h(Z_{n-1})+\epsilon_{n}\right), \ \ Z_{0}=\left(\xi_{0},C_{0}\right), \ \ \xi_{0} \in L^{1}(\mathbb{P}),
\label{RMstd1}
\end{eqnarray} 

\noindent where $h(z):=\mathbb{E}[H(z,X)]=\left(1-\frac{1}{1-\alpha}\mathbb{P}\left(\varphi(X) \geq \xi\right), C-\mathbb{E}[w(\xi,X)]\right)$ and $\epsilon_{n}:=(\Delta M_{n}, \Delta N_{n})$, \ $n\geq1$, denotes the $\mathcal{F}_{n}$-adapted martingale increment sequence with

$$
\Delta M_{n}:=\frac{1}{1-\alpha}\left(\mathbb{P}\left(\varphi(X)\geq\xi\right)_{| \xi=\xi_{n-1}}-\mbox{ \bf 1}_{ \left\{\varphi(X_{n}) \geq \xi_{n-1} \right\} }\right).
$$ 
 
\noindent Owing to Assumption $(A2)_{a}$ and Lebesgue's differentiation Theorem, one can interchange expectation and derivation, so that the function $h$ is differentiable at $z^{*}=(\xi^{*}_{\alpha},C^{*}_{\alpha})$ and
\begin{eqnarray}
h'(z^{*})=M:=\begin{pmatrix}
\frac{1}{1-\alpha} f_{\varphi(X)}(\xi^{*}_{\alpha})& 0 \vspace*{0.2cm} \\
\mathbb{E}\left[\left(\frac{\partial}{\partial \xi} w(\xi,X)\right)_{|\xi=\xi^{*}_{\alpha}}\right]& 1 \\
\end{pmatrix}.
\label{Jacobh}
\end{eqnarray}

\noindent Now, $\mathbb{E}\left[\left(\frac{\partial}{\partial \xi} w(\xi,X)\right)_{|\xi=\xi^{*}_{\alpha}}\right]= \left(1- \frac{1}{1-\alpha} \mathbb{P}(\varphi(X) \geq \xi^{*}_{\alpha})\right)=0$, so that, $ M=\begin{pmatrix}
\frac{1}{1-\alpha} f_{\varphi(X)}(\xi^{*}_{\alpha})& 0 \vspace*{0.2cm}\\
0& 1 \\ \end{pmatrix}$ is diagonal. Since $f_{\varphi(X)}$ is continuous at $\xi^{*}_{\alpha}$, $h$ is $\mathcal{C}^{1}$ in the neighborhood of $z^{*}$. 

\noindent To apply Theorem 2.3, we need to check assumptions $(i)\mbox{-}(iv)$ of~\eqref{hypii}.

\noindent Let $A>0$. First note that 

$$
\mathbb{E}\left[ \Delta M_{n+1}^{2a} | \mathcal{F}_{n} \right] \mbox{\bf 1}_{\left\{|Z_{n}-z^{*}| \leq A\right\}} \leq \left(\frac{1}{1-\alpha}\right)^{2a} 2^{2a} < + \infty.
$$

\noindent Thanks to Assumption $(A2)_{a}$, there exists $C_{\alpha,\Psi}>0$ such that
$$
\mathbb{E}\left[ \Delta N_{n+1}^{2a} | \mathcal{F}_{n} \right] \mbox{ \bf 1}_{\left\{||Z_{n}-z^{*}|| \leq A\right\}} \leq C_{\alpha,\Psi} \left(1+\xi_{n}^{2a}\right)  \mbox{ \bf 1}_{\left\{||Z_{n}-z^{*}|| \leq A\right\}} <  + \infty.
$$

\noindent Consequently, $(ii)$ of~\eqref{hypii} holds true with $b=2a>2$ since 

$$ 
\sup_{n \geq 0} \mathbb{E}\left[ | \epsilon_{n+1} |^{2a} | \mathcal{F}_{n} \right] \mbox{\bf 1}_{\left\{|Z_{n}-z^{*}| \leq A\right\}} < +\infty.
$$ 

\noindent It remains to check $(iv)$ for some positive definite symmetric matrix $\Gamma$. The dominated convergence theorem implies that
\begin{eqnarray*}
 \mathbb{E}\left[\left(\epsilon_{n+1}\epsilon_{n+1}^{T}\right)_{1,1}| \mathcal{F}_{n}\right]    &   =   &    \left(\frac{1}{1-\alpha}\right)^{2} \left(\mathbb{E}\left[\mbox{ \bf{1}}_{\left\{\varphi(X)\geq\xi\right\}}\right]_{|\xi=\xi_{n}}-\mathbb{E}\left[ \mbox{ \bf{ 1}}_{\left\{\varphi(X) \geq \xi\right\}}\right]_{|\xi=\xi_{n}}^{2}\right)  \\
 &  \stackrel{a.s.}{\longrightarrow} & \frac{\alpha}{1-\alpha},     \vspace*{0.2cm}    \\
 \mathbb{E}\left[\left(\epsilon_{n+1}\epsilon_{n+1}^{T}\right)_{1,2}| \mathcal{F}_{n}\right]   &    =    &    \mathbb{E}\left[\left(\epsilon_{n+1}\epsilon_{n+1}^{T}\right)_{2,1}| \mathcal{F}_{n}\right]     \vspace*{0.2cm}   \\
&    =    & \left(\frac{1}{1-\alpha}\right)^{2} \mathbb{E}\left[\left(\Psi(\varphi(X))-\xi\right) \mbox{ \bf 1}_{\left\{\varphi(X) \geq \xi \right\} }\right]_{| \xi=\xi_{n}}  \\
&         &  \hspace*{4cm} \times \left(1-\mathbb{E}\left[\mbox{ \bf 1}_{\left\{\varphi(X) \geq \xi\right\}}\right]_{|\xi=\xi_{n}}\right)    \vspace*{0.2cm}  \\
&    \stackrel{a.s.}{\longrightarrow}    &    \frac{\alpha}{\left(1-\alpha\right)^{2}}  \mathbb{E}\left[\left(\Psi(\varphi(X))-\xi^{*}_{\alpha}\right) \mbox{ \bf 1}_{\left\{\varphi(X) \geq \xi^{*}_{\alpha}\right\}}\right],   \vspace*{0.2cm}   \\
 \end{eqnarray*} 
 
 \begin{eqnarray*}
 \mathbb{E}\left[\left(\epsilon_{n+1}\epsilon_{n+1}^{T}\right)_{2,2}| \mathcal{F}_{n}\right]  &     =     &     \mathbb{E}\left[\left(\Delta N_{n+1}\right)^{2}| \mathcal{F}_{n}\right]  \vspace*{0.2cm}  \\
&     =    & \frac{1}{\left(1-\alpha\right)^{2}}\left(\mathbb{E} \left[\left(\Psi(\varphi(X_{n+1}))-\xi\right) \mbox{ \bf 1}_{\left\{\varphi(X_{n+1}) \geq \xi \right\}}| \mathcal{F}_{n}\right]_{|\xi=\xi_{n}} \right. \vspace*{0.2cm}   \\
&   \mbox{ }  & \left. \hspace*{4cm} -\mathbb{E}\left[\left(\Psi(\varphi(X))-\xi\right) \mbox{ \bf 1}_{\left\{\varphi(X) \geq \xi \right\}}\right]_{|\xi=\xi_{n}}^{2} \right)  \vspace*{0.2cm}  \\
&   \stackrel{a.s.}{\longrightarrow}    & \frac{1}{\left(1-\alpha\right)^{2}}\left(\mathbb{E} \left[\left(\Psi(\varphi(X))-\xi^{*}_{\alpha}\right)^{2} \mbox{ \bf 1}_{\left\{\varphi(X) \geq \xi^{*}_{\alpha}\right\}}\right]\right.  \vspace*{0.2cm} \\ 
&    \mbox{ }    & \left.\hspace*{4cm} -\mathbb{E}\left[\left(\Psi(\varphi(X))-\xi^{*}_{\alpha}\right)\mbox{ \bf 1}_{\left\{\varphi(X) \geq \xi^{*}_{\alpha}\right\}}\right]^{2}\right)   \vspace*{0.2cm}  \\
&    =    & \frac{1}{\left(1-\alpha\right)^{2}} \text{Var}\left(\left(\Psi(\varphi(X))-\xi^{*}_{\alpha}\right)\mbox{ \bf 1}_{\left\{\varphi(X) \geq \xi^{*}_{\alpha}\right\}} \right).\\
\end{eqnarray*} 

\noindent Using the continuity  of both functions $\xi \mapsto \mathbb{E}\left[\left(\Psi(\varphi(X))-\xi\right) \mbox{ \bf 1}_{\left\{\varphi(X) \geq \xi \right\}}\right]$ and 

\noindent $\xi \mapsto \mathbb{E}\left[\left(\Psi(\varphi(X))-\xi\right)^{2} \mbox{ \bf 1}_{\left\{\varphi(X) \geq \xi \right\}}\right]$ at $\xi^{*}_{\alpha}$, which follows from the continuity of $\Psi$ and of the distribution function of $\varphi(X)$, finally yields the $a.s.$ convergence of $\mathbb{E}\left[\epsilon_{n+1}\epsilon_{n+1}^{T}|\mathcal{F}_{n}\right]$ toward

$$ 
\Gamma=\begin{pmatrix}
\frac{\alpha}{1-\alpha} & \frac{\alpha}{\left(1-\alpha\right)^{2}}   \mathbb{E}\left[\left(\Psi(\varphi(X))-\xi^{*}_{\alpha}\right) \mbox{ \bf 1}_{\left\{\varphi(X) \geq \xi^{*}_{\alpha}\right\}}\right] \\
\frac{\alpha}{\left(1-\alpha\right)^{2}}  \mathbb{E}\left[\left(\Psi(\varphi(X))-\xi^{*}_{\alpha}\right) \mbox{ \bf 1}_{\left\{\varphi(X) \geq \xi^{*}_{\alpha}\right\}}\right]& \frac{1}{\left( 1-\alpha \right)^{2}} \text{Var}\left(\left(\Psi(\varphi(X))-\xi^{*}_{\alpha}\right) \mbox{ \bf 1}_{\left\{ \varphi(X) \geq \xi^{*}_{\alpha}\right\}} \right) \\
\end{pmatrix}.
$$

\noindent If $\gamma_{n}=\frac{\gamma_{1}}{n^{a}}$ with $\gamma_{1}>0$ and $\frac{1}{2}<a<1$, Ruppert-Polyak's Theorem implies that

$$
\sqrt{n}\left(\bar{Z}_{n}-z^{*}\right)\stackrel{\mathcal{L}}{\longrightarrow}\mathcal{N}\left(0,\Sigma\right)
$$
 
\noindent where $\Sigma = M^{-1}\Gamma \left(M^{-1}\right)^{T}$ is given by~\eqref{varasympt}. This completes the proof.
\end{proof}

\noindent \textbf{Remarks:} $\bullet$ It is possible to replace $w(\xi,x)$ in~\eqref{RMCVaR} and~\eqref{RMVaRCVaR} by $\tilde{w}(\xi,x)=\frac{1}{1-\alpha} \Psi(\varphi(x)) \mbox{\bf 1}_{\left\{\varphi(x) \geq \xi\right\}}$ since $C^{*}_{\alpha}=\mathbb{E}\left[\tilde{w}\left(\xi^{*}_{\alpha},X\right)\right]$. Thus, we only have to change also the martingale increment sequence $(\Delta N_{n})_{n \geq 1}$ by $\left(\Delta \widetilde{N}_{n}\right)_{n\geq1}$ defined by 
$$
\Delta \widetilde{N}_{n}:=\frac{1}{1-\alpha}\left(\mathbb{E}\left[\Psi(\varphi(X)) \mbox{\bf 1}_{\left\{\varphi(X)\geq \xi\right\}}\right]_{|\xi=\xi_{n-1}}-\Psi(\varphi(X_{n}))\mbox{\bf 1}_{\left\{\varphi(X_{n}) \geq \xi_{n-1}\right\}}\right).
$$

\noindent This provides another procedure $\tilde{C}_{n}$ for the computation of the $\Psi\mbox{-}\textnormal{CVaR}_{\alpha}$ which satisfies a Gaussian CLT with the same asymptotic covariance matrix.

\noindent $\bullet$ The quantile estimate based on the inversion of the empirical distribution function satisfies a Gaussian CLT with the same asymptotic covariance matrix than the one of the procedure $\overline{\xi}_{n}$, see for example~\cite{serfling} p.75. Obviously, there is no reason to believe that this first version can do better than the empirical quantile estimate. However, our quantile estimate has the advantage to be recursive: it naturally combines with a recursive IS algorithm in an adaptive way. In terms of computational complexity, once $N$ loss samples have been generated, the behaviour of the inversion of the empirical distribution function method needs a sorting algorithm: good behaviour is $\mathcal{O}\left(N\log(N)\right)$ element comparisons to sort the list of loss samples. Whereas the behaviour of the recursive quantile algorithm is $\mathcal{O}\left(N\right)$. 

\noindent  $\bullet$ One shows that if we choose $\beta_{n}=\frac{1}{n}$, $n \geq 1$ and $\gamma_{n}=\frac{1}{n^{a}}$ with $\frac{1}{2}< a <1$ in~\eqref{RMVaRCVaR}, the resulting two-time scale procedure satisfies a Gaussian CLT with the same asymptotic covariance matrix $\Gamma$ (at rates $\sqrt{\gamma_{n}^{-1}}$ and $\sqrt{n}$). However, by averaging the first component $\xi_{n}$, the resulting procedure becomes asymptotically efficient ($i.e.$ rate $\sqrt{n}$).

\begin{prop}{(Estimation of variance and confidence interval)} 
For every $n\geq 1$, set 
\begin{eqnarray*}
\sigma_{n}^{2} &:= &\frac{1}{(1-\alpha)^{2}}\left( \frac{1}{n} \sum_{k=1}^{n} \left(\Psi(\varphi(X_{k}))-\xi_{k-1}\right)^{2}\mbox{\bf 1}_{\left\{\varphi(X_{k})\geq \xi_{k-1}\right\}} \right. \\
 & & \left. \hspace*{6cm} - \left(\frac{1}{n} \sum_{k=1}^{n} \left(\Psi(\varphi(X_{k}))-\xi_{k-1}\right)\mbox{\bf 1}_{\left\{\varphi(X_{k})\geq \xi_{k-1}\right\}}\right)^{2}\right)
\end{eqnarray*}

\noindent where $(\xi_{n})_{n \geq 0}$ is the first component of~\eqref{RMVaR}. If $(A2)_{a}$ is satisfied for some $a\geq2$, then 
$$
\sigma_{n}^{2}\stackrel{a.s.}\longrightarrow \frac{1}{(1-\alpha)^{2}}\textnormal{ Var }\left(\left(\Psi(\varphi(X))-\xi^{*}_{\alpha}\right) \mbox{\bf{1}}_{\varphi(X)\geq \xi^{*}_{\alpha}}\right)
$$

\noindent and
\begin{eqnarray}
 \sqrt{n}\mbox{ } \frac{C_{n}-C^{*}_{\alpha}}{\sigma_{n}} \stackrel{\mathcal{L}} \longrightarrow \mathcal{N}(0,1).
\label{CLTbis}
\end{eqnarray}
\end{prop}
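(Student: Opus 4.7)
The plan has two independent pieces: the almost sure convergence of the empirical variance $\sigma_n^2$ to the announced limit, and then a studentized CLT obtained by combining this convergence with the ``naked'' CLT for the two-time-scale CVaR estimator mentioned in the remark immediately preceding the proposition.

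For the first piece, I would write $\sigma_n^2 = (1-\alpha)^{-2}(A_n - B_n^2)$ where $A_n$ and $B_n$ are empirical means of the form $\frac{1}{n}\sum_{k=1}^n f(\xi_{k-1}, X_k)$ with $f(\xi,x) = (\Psi(\varphi(x))-\xi)^j \mathbf{1}_{\{\varphi(x)\geq \xi\}}$, $j\in\{1,2\}$. The key move is the canonical decomposition: for each $f$, set $g(\xi) := \mathbb{E}[f(\xi,X)]$ and split
\[
\frac{1}{n}\sum_{k=1}^{n} f(\xi_{k-1},X_k) = \frac{1}{n}\sum_{k=1}^n \bigl(f(\xi_{k-1},X_k) - g(\xi_{k-1})\bigr) + \frac{1}{n}\sum_{k=1}^n g(\xi_{k-1}).
\]
The ``drift'' term on the right converges a.s.\ to $g(\xi^*_\alpha)$ by Cesaro's lemma, using that $\xi_k \to \xi^*_\alpha$ a.s.\ (from Theorem 2.2) and the continuity of $g$ at $\xi^*_\alpha$; the latter follows from the continuity of $\Psi$ and of the distribution function of $\varphi(X)$, exactly as already invoked in the proof of Theorem 2.4.

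For the ``noise'' term, whose summands $\Delta_k := f(\xi_{k-1},X_k) - g(\xi_{k-1})$ form $\mathcal{F}_k$-martingale increments, I would apply Kronecker's lemma: it suffices to show that $\sum_{k\geq 1} \Delta_k/k$ converges a.s., which by the $L^2$ martingale convergence theorem reduces to $\sum_{k\geq 1} k^{-2}\,\mathbb{E}[\Delta_k^2 \mid \mathcal{F}_{k-1}] < \infty$ a.s. This is the point where the strengthened assumption $(A2)_a$ with $a\geq 2$ pays off: it gives $\Psi(\varphi(X))\in L^{2a}(\mathbb{P})\subset L^4(\mathbb{P})$, so $\mathbb{E}[f(\xi,X)^2]$ is locally bounded in $\xi$; combined with the a.s.\ convergence (and hence eventual a.s.\ boundedness) of $(\xi_{k-1})$, this yields $\sup_k \mathbb{E}[\Delta_k^2\mid\mathcal{F}_{k-1}] < \infty$ a.s., so the $k^{-2}$-series is summable. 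Applying the continuous mapping $(a,b)\mapsto a - b^2$ then delivers $\sigma_n^2 \stackrel{a.s.}{\longrightarrow} (1-\alpha)^{-2}\,\mathrm{Var}\bigl((\Psi(\varphi(X))-\xi^*_\alpha)\mathbf{1}_{\{\varphi(X)\geq\xi^*_\alpha\}}\bigr)=:\sigma_\infty^2$.

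For the second piece, the remark preceding the proposition guarantees $\sqrt{n}(C_n - C^*_\alpha)\stackrel{\mathcal{L}}{\longrightarrow}\mathcal{N}(0,\sigma_\infty^2)$ in the two-time-scale regime $\beta_n=1/n$, $\gamma_n=n^{-a}$, $a\in(1/2,1)$, with precisely the same asymptotic variance as obtained above. Slutsky's lemma then delivers $\sqrt{n}(C_n - C^*_\alpha)/\sigma_n \stackrel{\mathcal{L}}{\longrightarrow}\mathcal{N}(0,1)$. The main obstacle throughout is the martingale-SLLN step, precisely because the integrand is evaluated at the random argument $\xi_{k-1}$: controlling the conditional second moments uniformly requires both $\xi_{k-1}\to\xi^*_\alpha$ a.s.\ and a fourth-moment control on $\Psi(\varphi(X))$, which is exactly why the statement upgrades $(A2)_1$ to $(A2)_a$ with $a\geq 2$.
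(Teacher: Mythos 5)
Your proposal is correct and follows essentially the same route as the paper, whose proof simply points back to the arguments of Section 2.2: the Cesaro/Kronecker decomposition of the empirical means into a drift term (handled by continuity of $\xi\mapsto\mathbb{E}[(\Psi(\varphi(X))-\xi)^{j}\mbox{\bf 1}_{\{\varphi(X)\geq\xi\}}]$ and $\xi_k\stackrel{a.s.}{\to}\xi^*_\alpha$) plus a martingale term killed by the finiteness of its bracket, for which $(A2)_a$ with $a\geq 2$ supplies the required fourth moment. The concluding Slutsky step, combining $\sigma_n\to\sigma_\infty$ with the $\sqrt{n}$-CLT for $C_n$ from the preceding remark, is likewise what the paper intends.
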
 

\begin{proof}[\bf Proof.] The proof follows from standard arguments already used in the proof of the $a.s.$ convergence of the sequence $(C_{n})_{n\geq1}$ defined by \eqref{RMCVaR}.
\end{proof}

In practice, the convergence of the algorithm will be chaotic. The bottleneck of this algorithm is that it is only updated on rare events since it tries to measure the tail distribution of $\varphi(X)$ : $\mathbb{P}(\varphi(X)>\mbox{VaR}_{\alpha})=1-\alpha \approx 0$. Another problem may be the simulation of $\varphi(X)$. In practice, we have to deal with large portfolios of complex derivative securities and options. Each evaluation may require a lot of computational efforts and takes a long time. So, for practical implementation it is necessary to combine the above procedure with variance reduction techniques to achieve accurate results at a reasonable cost. The most appropriate technique when dealing with rare events is IS.

\medskip
\subsection{Some background on IS using stochastic approximation algorithm}

The second tool we want to introduce in this paper is a recursive IS procedure which increases the probability of simulations for which $\varphi(X)$ exceeds $\xi$. Our goal is to combine it adaptively with our first naive algorithm. Assume that $X$ has an absolutely continuous distribution $\mathbb{P}_{X}(dx)=p(x) \lambda_{d}(dx)$ where $\lambda_{d}$ denotes the Lebesgue measure on $(\mathbb{R}^{d},\it \mathcal{B}or(\mathbb{R}^{d}))$. The main idea of importance sampling by translation applied to the computation of $$\mathbb{E}[F(X)],$$ where $F \in L^{2}(\mathbb{P}_{X})$ satisfies $\mathbb{P}(F(X) \neq 0)>0$, is to use the invariance of the Lebesgue measure by translation, for every $\theta \in \mathbb{R}^{d}$,
\begin{equation}
\mathbb{E}[F(X)]=\mathbb{E}\left[F(X+\theta)\frac{p(X+\theta)}{p(X)}\right],
\label{invartranslation}
\end{equation}

\noindent and among all these random vectors with the same expectation, we want to select the one with the lowest variance, \emph{$i.e.$} the one with lowest quadratic norm

\begin{equation}
Q(\theta):=\mathbb{E}\left[F^{2}(X+\theta)\frac{p^{2}(X+\theta)}{p^{2}(X)}\right] \leq +\infty, \hspace*{0.4cm} \theta \in \mathbb{R}^{d}.
\label{nablaQ1v}
\end{equation}

\noindent If the following assumption

\begin{hyp2}
\forall \theta \in \mathbb{R}^{d}, \hspace*{0.8cm}  \mathbb{E}\left[ F^{2}(X) \frac{p(X)}{p(X-\theta)}\right]< +\infty
\label{B1}
\end{hyp2}

\noindent holds true, then $Q$ is everywhere finite and a reverse change of variable shows that:
\begin{eqnarray}
Q(\theta)=\mathbb{E}\left[ F^{2}(X) \frac{p(X)}{p(X-\theta)}\right],  \hspace*{0.4cm} \theta \in \mathbb{R}^{d}.
\label{ExpreVH1}
\end{eqnarray}

\noindent Now if $p$ satisfies 
\begin{hyp2}
                        \left\{
                        \begin{array}{l}
                            (i) \hspace*{0.3cm} \forall x \in \mathbb{R}^{d}, \mbox{  } \theta \mapsto p(x-\theta) \mbox{ is $\log$-\textsl{concave} }\\\\
                            (ii) \hspace*{0.2cm} \forall x \in \mathbb{R}^{d}, \mbox{ } \lim_{|\theta| \rightarrow +\infty } p(x-\theta)=0 \hspace*{0.3cm} \mbox { or }  \forall x \in \mathbb{R}^{d}, \mbox{ } \lim_{|\theta| \rightarrow +\infty} \frac{p(x-\theta)}{p^{2}(x-\frac{\theta}{2})}=0,
                        \end{array}
                        \right.
\label{B2}
\end{hyp2}

\noindent one shows that $Q$ is (strictly) finite, convex, goes to infinity at infinity so that $\arg \min Q={\left\{\nabla Q =0\right\}}$ is non empty (see \cite{arouna} and \cite{lemairepages}). Provided that $\nabla Q$ admits a representation as an expectation, then it is possible to devise a recursive RM procedure to approximate the optimal parameter $\theta^{*}$. Recursive IS by stochastic approximation has been first investigated by Kushner and then by several authors, see e.g. \cite{dufresne} and \cite{fu} in order to ``optimize'' or ``improve'' the change of measure in IS using a stochastic gradient RM algorithm based on the representation of $\nabla Q(\theta)$.

\noindent Recently, it has been brought back to light by Arouna (see \cite{arouna}) in the Gaussian case, based on the natural representation of $\nabla Q$ obtained by formally differentiating~\eqref{ExpreVH1}. Since we have no knowledge about the regularity of $F$ and do not wish to have any, we differentiate the second representation of $Q$ in~\eqref{ExpreVH1} and not~\eqref{nablaQ1v}. We obtain $\nabla Q(\theta)=\mathbb{E}\left[K(\theta,X)\right]$. 

 When $X=\mathcal{N}(0,1)$, $Q(\theta)=e^{\frac{|\theta|^{2}}{2}}\mathbb{E}[F^{2}(X)e^{-
\theta X}]$ so that $K(\theta,x)=e^{\frac{|\theta|^{2}}{2}}F^{2}(x)e^{-\theta x}(\theta-x)$. However, given this resulting form of $K$, the classical convergence results do not apply since
$||K(\theta,X)||_{2}$ is not sub-linear in $\theta$ (see condition~\eqref{LinearGrowthAssump} of Theorem 2.2). This induces the explosion of the procedure at almost
every implementation as pointed out in \cite{arouna}. This leads the author to introduce a ``constrained'' variant of the regular procedure based on repeated reinitializations known as the projection ``\`{a} la Chen''. It forces the stability of the algorithm and prevents explosion. Let us also mention a first alternative approach investigated in \cite{arouna} and \cite{bardou}, where Arouna and Bardou change the function to be minimized by introducing an entropy based criterion. Although it is only an approximation, it turns out to be often close to the original method. 

 Recently, Lemaire and Pag\`{e}s in \cite{lemairepages} revisited the original
approach and provided a new representation of $\nabla Q(\theta)$ for which the resulting $K(\theta,X)$ has a linear growth in $\theta$ so that all assumptions of Theorem 2.2 are satisfied. Thanks to \textsl{a third translation} of the variable
$\theta$, it is possible to plug back the parameter $\theta$ ``into'' $F$, the function $F$ having in common applications a known behaviour at infinity which makes possible to devise a ``regular'' and ``unconstrained'' stochastic algorithm. We will rely partially on this approach to devise our final procedure to compute both VaR and CVaR. To be more specific about the methodology proposed in \cite{lemairepages}, we introduce the following assumption on the probability density $p$ of $X$
\begin{hyp2}
\exists b \in [1,2] \mbox{ such that }
\left\{
\begin{array}{l}
     (i) \hspace*{0.3cm}  \frac{|\nabla p(x)|} {p(x)}  = O(|x|^{b-1})\hspace*{0.2cm} \mbox{ as } \hspace*{0.2cm} |x|\rightarrow \infty \\ \\
     (ii) \hspace*{0.3cm} \exists \rho>0, \log\left(p(x)\right) + \rho |x|^{b}\hspace*{0.2cm} \mbox{is convex,}
\end{array}
\right.
\label{B3}
\end{hyp2}

\noindent and introduce the assumption on $F$ :
\begin{hyp2}
\forall A >0 , \mathbb{E}\left[F(X)^{2} e^{A|X|^{b-1}}\right]< + \infty.
\end{hyp2}

\noindent One shows that as soon as (B1), (B2), (B3) and (B4) are satisfied, $Q_{1}$ and $Q_{2}$ are both finite and differentiable on $\mathbb{R}^{d}$ with a gradient given by
\begin{eqnarray}
\nabla Q(\theta)&:= & \mathbb{E}\left[F(X-\theta)^{2} \underbrace{\frac{p^{2}(X-\theta)}{p(X)p(X-2\theta)}\frac{\nabla p(X-2\theta)}{p(X-2\theta)}}_{W(\theta,X)}\right].
\label{nablaQ}
\end{eqnarray}

\noindent This expression may look complicated at first glance but in fact the weight term $W(\theta,X)$ can be easily controlled by a deterministic function of $\theta$ since 
\begin{equation}
|W(\theta,X)| \leq e^{2\rho|\theta|^{b}}(A|x|^{b-1}+A|\theta|^{b-1}+B)
\label{majorationdensity}
\end{equation}

\noindent for some real constants $A$ and $B$. In the case of a normal distribution $X\stackrel{d}{=}\mathcal{N}(0;1)$, 

$$
W(\theta,X)=e^{\theta^{2}}(2\theta-X).
$$ 

\noindent So, if we have a control on the growth of the function $F$, typically for some positive constant $c$
\begin{hyp2}
\left\{
\begin{array}{l}
  \forall x \in \mathbb{R}^{d}, |F(x)| \leq  G(x) \hspace*{0.2cm} \mbox{ and }   \hspace*{0.2cm} G(x+y) \leq C(1+G(x))^{c} (1+G(y))^{c} \\
   \mbox{ } \\
  \hspace*{4cm}  \mathbb{E}\left[|X|^{2(b-1)}G(X)^{4c}\right] < + \infty,\\
\end{array}
\right.
\label{B5c}
\end{hyp2}

\noindent then by setting 
\begin{equation}
\widetilde{W}(\theta,X):=\frac{e^{-2\rho|\theta|^{b}}}{1+G(-\theta)^{2c}} W(\theta,X),
\label{correctionpoids}
\end{equation}

\noindent we can define $K$ by 
\begin{equation}
K(\theta,x):=F(x-\theta)^{2}\widetilde{W}(\theta,X)
\end{equation}

\noindent so that it satisfies the linear growth assumption~\eqref{LinearGrowthAssump} of Theorem 2.2 and 
\begin{equation*}
\left\{{\theta \in \mathbb{R}^{d} \ | \ \mathbb{E}\left[K(\theta,X)\right]=0}\right\}=\left\{{\theta \in \mathbb{R}^{d}\ | \ \nabla Q(\theta)=0}\right\}.
\end{equation*}

\noindent Moreover, since $Q$ is convex $\nabla Q$ satisfies~\eqref{MeanRev}. Now we are in position to derive a recursive unconstrained RM algorithm 
\begin{equation}
\theta_{n}=\theta_{n-1}-\gamma_{n}K(\theta_{n-1},X_{n}), \ \theta_{0} \in \mathbb{R}^{d},
\label{URIS}
\end{equation}

\noindent that $a.s.$ converges to an $\arg\min Q$-valued (square integrable) random variable $\theta^{*}$.
\end{section}

\begin{section}{Design of a faster procedure: importance sampling and moving confidence level }

\subsection{Unconstrained adaptive importance sampling device}
We noted previously that the bottleneck in using the above algorithm lies in its very slow and chaotic convergence owing to the fact that $\mathbb{P}(\varphi(X)>\xi^{*}_{\alpha})= 1-\alpha$ is close to 0. This means that we observe fewer and fewer simulations for which $\varphi(X_{k})>\xi_{k-1}$ as the algorithm evolves. Thus, it becomes more and more difficult to compute efficiently some estimates of VaR$_{\alpha}$ and CVaR$_{\alpha}$ when $ \alpha \approx 1$. Moreover, in the bank and energy sectors, practitioners usually deal with huge portfolio made of hundreds or thousands of risk factors and options. The evaluation step of $\varphi(X)$ may be extremely time consuming. Consequently, to achieve accurate estimates of both VaR$_{\alpha}$ and CVaR$_{\alpha}$ with reasonable computational effort, the above algorithm~\eqref{VaRCVaRequalsteps} drastically needs to be speeded up by an IS procedure to ``recenter" the simulations where ``things do happen'', $i.e.$ which generates scenarios for which $\varphi(X)$ exceeds $\xi$.

 In this section we will focus on IS by mean translation. Our aim is to combine adaptively the IS (unconstrained) recursive procedure investigated in \cite{lemairepages} with our first ``naive'' approach described in~\eqref{VaRCVaRequalsteps}. Doing so every new sample is used to both optimize the IS change of measure and update VaR and CVaR procedures. We plan to minimize the asymptotic variance of both components of the algorithm (in its ``averaged'' form, as detailed in Theorem 2.4), namely 
\begin{equation}
\frac{\alpha(1-\alpha)}{f_{\varphi(X)}(\xi^{*}_{\alpha})}=\frac{\text{Var}(\textbf{1}_{\varphi(X)\geq \xi^{*}_{\alpha}})}{f_{\varphi(X)}(\xi^{*}_{\alpha})} \ \ \mbox{ for the } \textnormal{VaR}_{\alpha},
\end{equation} 

\noindent and,
\begin{equation}
\frac{\text{Var}((\Psi(\varphi(X))-\xi^{*}_{\alpha})\bf{1}_{\varphi(X)\geq \xi^{*}_{\alpha}})}{(1-\alpha)^{2}} \ \ \mbox{ for the } \textnormal{CVaR}_{\alpha},
\end{equation} 

\noindent provided the non-degeneracy assumption

\begin{hyp4} \forall \xi \in \arg \min V, \   \mathbb{P}\left(\left(\Psi(\varphi(X))-\xi\right)^{2}\mbox{\bf
1}_{\left\{\varphi(X) \geq \xi\right\}} >0\right)>0,
\end{hyp4}

\noindent holds. Since the density $f_{\varphi(X)}(\xi^{*}_{\alpha})$ is an intrinsic constant (and comes in fact from the Jacobian matrix $Dh(\xi^{*}_{\alpha},C^{*}_{\alpha})$ of the mean function $h$ of the algorithm) we are led to apply the IS paradigm described in Section 2.3 to
$$
F_{1}^{*}(X)=\textbf{1}_{\varphi(X)\geq \xi^{*}_{\alpha}} \ \ \mbox{ and } \ \ F_{2}^{*}(X)=\left(\Psi(\varphi(X))-\xi^{*}_{\alpha}\right) \textbf{1}_{\left\{\varphi(X)\geq \xi^{*}_{\alpha}\right\}}. 
$$
\noindent Let us temporary forget that of course we do not know $\xi^{*}_{\alpha}$ at this stage. Those two functionals are related to the minimization of the two convex functions
\begin{eqnarray}
Q_{1}(\theta,\xi^{*}_{\alpha})&:= &\mathbb{E}\left[\mbox{ \bf 1}_{\left\{\varphi(X) \geq \xi^{*}_{\alpha}\right\}} \frac{p(X)}{p(X-\theta)}\right]\\
Q_{2}(\mu,\xi^{*}_{\alpha})&:=
&\mathbb{E}\left[\left(\Psi(\varphi(X))-\xi^{*}_{\alpha}\right)^{2}\mbox{\bf
1}_{\left\{\varphi(X) \geq \xi^{*}_{\alpha}\right\}} \frac{p(X)}{p(X-\mu)}\right].
\label{Q2def}
\end{eqnarray}

\noindent We can apply  to these functions the minimizing procedure~\eqref{URIS} described at section 2.3. Since 
\begin{equation}
H_{1}\left(\xi^{*}_{\alpha},x\right)=1-\frac{1}{1-\alpha}F_{1}^{*}(x) \ \ \mbox{ and } \ \ H_{2}\left(\xi^{*}_{\alpha},C^{*}_{\alpha},x\right)=C^{*}_{\alpha}-\xi^{*}_{\alpha}-\frac{1}{1-\alpha}F_{2}^{*}(x)
\end{equation}

\noindent it is clear, owing to~\eqref{invartranslation} that 
$$
\mathbb{E}\left[H_{i}(\xi^{*}_{\alpha},X)\right]=\mathbb{E}\left[H_{i}\left(\xi^{*}_{\alpha},X+\theta\right)\frac{p(X+\theta)}{p(X)}\right] \ \ i=\ 1,\ 2.
$$

\noindent Now, since we do not know either $\xi^{*}_{\alpha}$ and $C^{*}_{\alpha}$ (the VaR$_{\alpha}$ and the CVaR$_{\alpha}$) respectively we make the whole procedure adaptive by replacing at step $n$, these unknown parameters by their running approximation at step $n-1$. This finally justifies to introduce the following global procedure. One defines the state variable, for $n\geq0$,
$$
Z_{n}:=\left(\xi_{n},C_{n},\theta_{n},\mu_{n}\right),
$$

\noindent where $\xi_{n},\ C_{n}$ denotes the VaR$_{\alpha}$ and the CVaR$_{\alpha}$ approximate, $\theta_{n}, \ \mu_{n}$ denotes the variance reducers for the VaR and the CVaR procedures. We update this state variable recursively by
\begin{equation}
Z_{n}=Z_{n-1}-\gamma_{n}L\left(Z_{n-1},X_{n}\right),
\label{algoglob}
\end{equation}

\noindent where $\left(X_{n}\right)_{n\geq1}$ is an i.i.d. sequence with distributions $X$ (and probability density $p$) and 
\begin{eqnarray}
L_{1}(\xi,\theta,x) & := & e^{-\rho|\theta|^{b}}\left(1-\frac{1}{1-\alpha} \mbox{ \bf 1}_{\left\{\varphi(x+\theta)\geq
\xi\right\}}\frac{p(x+\theta)}{p(x)}\right) ,\nonumber \\
L_{2}(\xi,C,\mu,x)  & := & C-\xi-\frac{1}{1-\alpha}(\Psi(\varphi(x+\mu))-\xi) \mbox{ \bf 1}_{\left\{\varphi(x+\mu) \geq \xi \right\}}\frac{p(x+\mu)}{p(x)}, \nonumber \\
L_{3}\left(\xi,\theta,x\right) & := & e^{-2 \rho |\theta|^{b}}\mbox{\bf 1}_{\left\{\varphi(x-\theta) \geq \xi\right\}} \frac{p^{2}(x-\theta)}{p(x)p(x-2\theta)}\frac{\nabla p(x-2\theta)}{p(x-2\theta)} ,  \\
L_{4}\left(\xi,\mu,x\right) & := & \frac{e^{-2\rho | \mu |^{b}}}{1+G(-\mu)^{2c}+\xi^{2}} \left(\Psi(\varphi(x-\mu))-\xi\right)^{2}  \nonumber \\
 &  & \hspace*{4cm} \times \mbox{\bf 1}_{\left\{\varphi(x-\mu) \geq \xi \right\}}\frac{p^{2}(x-\mu)}{p(x)p(x-2\mu)}\frac{\nabla p(x-2\mu)}{p(x-2\mu)}.
\label{HIS}
\end{eqnarray} 

\noindent The following proposition establishes the $a.s.$ convergence of the procedure. For the sake of simplicity we will assume the uniqueness of the VaR$_{\alpha}$ of $\varphi(X)$.

\begin{prop}{(Efficient computation of VaR and CVaR).}
Suppose that $\Psi(\varphi(X)) \in L^{2}\left(\mathbb{P}\right)$, that the distribution function of $\varphi(X)$ is continuous and increasing (so that $\textnormal{VaR}_{\alpha}(\varphi(X))$ is unique) and that (A3) holds. Assume that, for every $\xi \in \mathbb{R}$, $Q_{i}(.,\xi)$ (i=1,2) satisfies (B1), $i.e.$
\begin{equation}
\forall \theta \in \mathbb{R}^{d}, \ \ \mathbb{E}\left[\left(1+\left(\Psi(\varphi(X))-\xi\right)^{2}\right)\textbf{1}_{\left\{\varphi(X)\geq\xi \right\}}\frac{p(X)}{p(X-\theta)}\right] < +\infty.
\end{equation}

\noindent Suppose that $p$ satisfies (B2) and (B3) and that 
$$
\forall A>0, \mathbb{E}\left[\left(\Psi(\varphi(X))^2+1\right)e^{A|X|^{b-1}}\right]<+\infty.
$$

\noindent Assume that the step sequence $(\gamma_{n})_{n\geq1}$ satisfies (A1).
\noindent Then, $$Z_{n}\stackrel{a.s.}{\longrightarrow}z^{*}:=(\xi^{*}_{\alpha},C^{*}_{\alpha},\theta^{*}_{\alpha},\mu^{*}_{\alpha})$$
\noindent where $\xi^{*}_{\alpha}=\textnormal{VaR}_{\alpha}(\varphi(X))$, $C^{*}_{\alpha}=\Psi\textnormal{-CVaR}_{\alpha}(\varphi(X))$ and $(\theta^{*}_{\alpha},\mu^{*}_{\alpha})$ are the optimal variance reducers (to be precise some random vectors taking values in $\left\{{\nabla Q_{1}(\xi^{*}_{\alpha},.)=0}\right\}$ and $\left\{{\nabla Q_{2}(\xi^{*}_{\alpha},.)=0}\right\}$ respectively). 
\end{prop}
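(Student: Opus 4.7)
The plan is to apply the Robbins--Monro variant (Theorem~2.2) to the full state vector $Z_n=(\xi_n,C_n,\theta_n,\mu_n)$, exploiting the triangular dependency structure visible in the mean function $\ell(z):=\mathbb{E}[L(z,X)]$. Using the translation-invariance identity~\eqref{invartranslation} and the representation~\eqref{nablaQ} of $\nabla Q$, the four components of $\ell$ are
\begin{align*}
\ell_1(\xi,\theta) &= e^{-\rho|\theta|^b}V'(\xi), & \ell_2(\xi,C,\mu) &= C-\mathbb{E}[w(\xi,X)],\\
\ell_3(\xi,\theta) &= e^{-2\rho|\theta|^b}\nabla_\theta Q_1(\theta,\xi), & \ell_4(\xi,\mu) &= \tfrac{e^{-2\rho|\mu|^b}}{1+G(-\mu)^{2c}+\xi^2}\,\nabla_\mu Q_2(\mu,\xi).
\end{align*}
Since $V$ is strictly convex with unique zero $\xi^*_\alpha$ and each $Q_i(\cdot,\xi)$ is convex by (B2), the zero set $\mathcal{T}^*:=\{\ell=0\}$ equals $\{\xi^*_\alpha\}\times\{C^*_\alpha\}\times\arg\min Q_1(\cdot,\xi^*_\alpha)\times\arg\min Q_2(\cdot,\xi^*_\alpha)$, which is exactly the target set of the statement. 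Because this set may not be a singleton (the arg-min sets can be non-trivial), the convergence will be to a $\mathcal{T}^*$-valued random vector, which is precisely the content of the variant in Theorem~2.2.

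I would proceed sequentially rather than hitting the full system at once, because the mean-reversion~\eqref{MeanRev} only holds cleanly component by component at the limit value of $\xi$. First, establish $\xi_n\stackrel{a.s.}{\to}\xi^*_\alpha$ by treating the $\xi$-recursion as a Robbins--Monro algorithm with the strictly positive random scaling $e^{-\rho|\theta_{n-1}|^b}$: the mean-reversion is preserved since the scaling only shrinks the drift, and the linear growth~\eqref{LinearGrowthAssump} follows from $\mathbb{E}[L_1(\xi,\theta,X)^2]\le 2e^{-2\rho|\theta|^b}\bigl(1+(1-\alpha)^{-2}Q_1(\theta,\xi)\bigr)$ together with the bound~\eqref{majorationdensity} on $W(\theta,X)$. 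Then apply Theorem~2.2 separately to the $\theta$- and $\mu$-recursions with $\xi$ formally frozen at $\xi^*_\alpha$, absorbing the discrepancies $\ell_3(\xi_n,\theta)-\ell_3(\xi^*_\alpha,\theta)$ and $\ell_4(\xi_n,\mu)-\ell_4(\xi^*_\alpha,\mu)$ into the remainder sequence $(r_n)$ of the theorem: these tend to $0$ a.s.\ by continuity in $\xi$ of the drifts (which follows from the assumed continuity of the distribution of $\varphi(X)$ and (B1)-(B3)), so the condition $\sum \gamma_n|r_n|^2<\infty$ reduces to summability of $\gamma_n^2$. Finally, once $\xi_n\to\xi^*_\alpha$ and $\mu_n\to\mu^*_\alpha$ are known almost surely, a Kronecker/Cesaro argument as in Section~2.2 (applied to the martingale $\sum \gamma_k\Delta N_k^C$ associated with $L_2$) yields $C_n\stackrel{a.s.}{\to}C^*_\alpha$, using dominated convergence to conclude $\mathbb{E}[w(\xi_n,X+\mu_n)\tfrac{p(X+\mu_n)}{p(X)}\mid \mathcal{F}_{n-1}]\to C^*_\alpha$.

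The main technical obstacle is verifying the linear-growth condition~\eqref{LinearGrowthAssump} for $L_3$ and $L_4$, because the unweighted weight $W(\theta,X)$ blows up like $e^{2\rho|\theta|^b}$ times a polynomial in $|X|$ and $|\theta|$ by~\eqref{majorationdensity}. The normalizations $e^{-2\rho|\theta|^b}$ for $L_3$ and $\frac{e^{-2\rho|\mu|^b}}{1+G(-\mu)^{2c}+\xi^2}$ for $L_4$ are tailored precisely to kill these exponentials, in the spirit of the Lemaire--Pag\`es framework recalled around~\eqref{URIS}; the additional $\xi^2$ in the denominator of $L_4$ absorbs the quadratic $\xi$-dependence coming from $(\Psi(\varphi(x-\mu))-\xi)^2$, so that the $L^2$ bound becomes at most affine in $(\xi,\mu)$. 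Checking these bounds rigorously uses the moment hypothesis $\mathbb{E}[(\Psi(\varphi(X))^2+1)e^{A|X|^{b-1}}]<+\infty$ together with the integrability of $Q_1,Q_2$ under (B1). A secondary but delicate point is ensuring that the ``coupling remainders'' $r_n$ do not destroy integrability: here the fact that they are \emph{measurable functions of the already-bounded} component $\xi_n$, together with pathwise continuity of $\xi\mapsto\ell_3(\xi,\cdot),\ell_4(\xi,\cdot)$ in neighbourhoods of $\xi^*_\alpha$, is what saves the argument.
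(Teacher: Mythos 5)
Your toolkit is the right one and your treatment of $C_n$ (martingale increments plus the Kronecker/Cesaro argument of Section~2.2) matches the paper's. The genuine problem is your decision to run the argument \emph{sequentially} --- first $\xi_n$ alone, then $(\theta_n,\mu_n)$ with $\xi$ frozen at $\xi^{*}_{\alpha}$ --- whereas the paper applies the extended Robbins--Monro theorem to the triple $(\xi_n,\theta_n,\mu_n)$ \emph{jointly}, writing $\left\langle \zeta-\zeta^{*},l(\zeta)\right\rangle$ as a sum of three nonnegative terms and checking linear growth componentwise (for $L_1$ via the bound $p^{2}(x)/p(x-\theta)\le e^{2\rho|\theta|^{b}}p(x+\theta)$ coming from (B3)(ii), for $L_3,L_4$ by adapting \cite{lemairepages}), before handling $C_n$ separately. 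Your first step is circular: the drift of the $\xi$-recursion is $e^{-\rho|\theta_{n-1}|^{b}}V'(\xi_{n-1})$, so the effective step is $\gamma_n e^{-\rho|\theta_{n-1}|^{b}}$. Saying the scaling ``only shrinks the drift'' is not enough; if $|\theta_n|$ drifted to infinity the series $\sum_n \gamma_n e^{-\rho|\theta_{n-1}|^{b}}$ could converge and mean reversion would buy you nothing, and you cannot exclude this because control of $\theta_n$ is precisely what you defer to the second step. The joint application of Theorem~2.2 is what breaks this loop, since the Robbins--Siegmund-type argument bounds all three components of $|Z_n-z^{*}|^{2}$ simultaneously.

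The second gap is quantitative: Theorem~2.2 requires $\sum_n \gamma_n |r_n|^{2}<\infty$, and this does \emph{not} follow from $r_n\to 0$ a.s.\ together with $\sum_n\gamma_n^{2}<\infty$. Take $\gamma_n=n^{-3/4}$ (admissible under (A1)) and $r_n=n^{-1/9}\to 0$: then $\gamma_n r_n^{2}=n^{-35/36}$ is not summable. To absorb the discrepancy $\ell_3(\xi_n,\cdot)-\ell_3(\xi^{*}_{\alpha},\cdot)$ into the remainder you would need a \emph{rate} for $\xi_n\to\xi^{*}_{\alpha}$ of order roughly $\sqrt{\gamma_n}$, which is unavailable at this point of the argument (and there is the further point that the pathwise discrepancy $L_3(\xi_{n-1},\theta,X_n)-L_3(\xi^{*}_{\alpha},\theta,X_n)$ contains a difference of indicators, hence does not tend to $0$ pathwise but only in conditional mean). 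So the ``freeze $\xi$ and push the error into $r_n$'' device, as you use it, does not meet the hypotheses of the theorem; the remainder sequence is genuinely needed only later, in Section~3.2, where $|r_n|\le |\alpha_n-\alpha|/(1-\alpha)^{2}$ is deterministic and the summability condition is imposed directly on $(\alpha_n)$.
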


\begin{proof}[\bf Proof.] We first prove the $a.s.$ convergence of the 3-tuple $(\xi_{n}, \theta_{n}, \mu_{n})$ that of $(C_{n})_{n\geq1}$ will follow by the same arguments used in the proof in Section 2.2. The mean function $l$ is defined by 
$$
l(\xi,\theta,\mu):=\left(e^{-\rho|\theta|^{b}}\left(1-\frac{1}{1-\alpha}\mathbb{P}\left(\varphi(X)\geq\xi\right)\right),e^{-2\rho|\theta|^{b}}\nabla Q_{1}\left(\theta,\xi\right), \frac{e^{-2\rho|\mu|^{b}}}{1+G(-\mu)^{c}+\Psi(\xi)^{2}}\nabla Q_{2}(\mu,\xi)\right),
$$
\noindent hence,
$$
\mathcal{T}^{*}=\left\{{l=0}\right\}=\left\{{\xi^{*}_{\alpha}}\right\}\times\left\{{\nabla Q_{1}(\xi^{*}_{\alpha},.)=0}\right\}\times\left\{{\nabla Q_{2}(\xi^{*}_{\alpha},.)=0}\right\}.
$$ 

\noindent In order to apply the extended Robbins-Monro Theorem, we have to check the following facts:

\medskip
\noindent $\bullet$  \textsl{Mean reversion}: One checks that $\forall \zeta=(\xi,\theta,\mu) \in \mathbb{R}\times\mathbb{R}^{d}\times\mathbb{R}^{d} \mbox{ \textbackslash }\mathcal{T}^{*}$, $\forall \zeta^{*} \in \mathcal{T}^{*}$,

\begin{eqnarray*}
\left\langle \zeta-\zeta^{*},l(\zeta)\right\rangle & = & e^{-\rho|\theta|^{b}}(\xi-\xi^{*}_{\alpha})\frac{(\mathbb{P}(\varphi(X)\leq\xi)-\alpha)}{1-\alpha}+\frac{e^{-2\rho |\theta|^{b}}}{1-\alpha}\left\langle \theta-\theta^{*}_{\alpha},\nabla Q_{1}(\theta,\xi)\right\rangle \\
& & +\mbox{ } \frac{e^{-2\rho |\mu|^{b}}}{(1-\alpha)(1+F(-\mu)^{2c})}\left\langle \mu-\mu^{*}_{\alpha},\nabla Q_{2}(\mu,\xi)\right\rangle>0, 
\end{eqnarray*}

\noindent owing to the convexity of $\theta \mapsto Q_{1}(\theta,\xi)$ and $\mu \mapsto Q_{2}(\mu,\xi)$, for every $\xi\in\mathbb{R}$.

\medskip
\noindent $\bullet$ \textsl{Linear growth:} Let us first deal with $L_{1}$. First note that:
$$\mathbb{E}\left[L_{1}\left(\xi,\theta,X\right)^{2}\right]\leq C\left(1+\mathbb{E}\left[e^{-2\rho|\theta|^{b}}\mbox{\bf 1}_{\left\{\varphi(X+\theta)\geq\xi\right\}}\frac{p^{2}(X+\theta)}{p^{2}(X)}\right]\right)\leq C\left(1+\mathbb{E}\left[e^{-2\rho|\theta|^{b}}\frac{p(X)}{p(X-\theta)}\right]\right).$$ 

\noindent Now, elementary computations show (see \cite{lemairepages} for more details) that (B3)(ii) implies that 
$$ 
\frac{p^{2}(x)}{p(x-\theta)} \leq 	e^{2\rho|\theta|^{b}}p(x+\theta),
$$

\noindent so that 
$$
\mathbb{E}\left[e^{-2\rho|\theta|^{b}}\frac{p(X)}{p(X-\theta)}\right]\leq \mathbb{E}\left[\frac{p(X+\theta)}{p(X)}\right]=1.
$$

\noindent $L_{3}$ and $L_{4}$ can be treated by a straightforward adaptation of the proofs in \cite{lemairepages}. Then, one can apply Theorem 2.2 which yields the announced result for $\left(\xi_{n},\theta_{n},\mu_{n}\right)$.
\noindent The $a.s.$ convergence of $C_{n}$ toward $C^{*}_{\alpha}$ can be deduced from the $a.s.$ convergence of the series
$$
M_{n}^{\gamma}:=\sum_{k=1}^{n} \gamma_{k} \Delta \widetilde{M}_{k}, \mbox { } n \geq 1,
$$

\noindent where $\Delta \widetilde{M}_{n}$ are martingale increments defined by
\begin{eqnarray*}
\Delta \widetilde{M}_{n} & = & \mathbb{E}[(\Psi(\varphi(X))-\xi)\mbox{ \bf 1}_{\left\{\varphi(X) \geq \xi\right\}}]_{|\xi=\xi_{n-1}}  \\
 & & \hspace*{1cm} -(\Psi(\varphi(X_{n}+\mu_{n-1}))-\xi_{n-1})\mbox{ \bf 1}_{\left\{\varphi(X_{n}+\mu_{n-1})\geq \xi_{n-1}\right\}}\frac{p(X_{n}+\mu_{n-1})}{p(X_{n})}, \ \ n\geq1,
\end{eqnarray*}

\noindent satisfying 
$$
\mathbb{E}\left[\Delta \widetilde{M}_{n}^{2} | \mathcal{F}_{n-1}\right] \leq  \mathbb{E}\left[(\Psi(\varphi(X+\mu))-\xi)\mbox{ \bf 1}_{\left\{\varphi(X+\mu)\geq \xi\right\}}\frac{p(X+\mu)}{p(X)}\right]_{|\xi=\xi_{n-1}, \theta=\theta_{n-1}, \mu=\mu_{n-1}}.
$$
\noindent We conclude by the same arguments used in the proof in Section 2.2. 
\end{proof}

 Now, we are interested by the rate of convergence of the procedure. It shows that the algorithm behaves as expected under quite standard assumptions: it satisfies a Gaussian CLT with optimal rate and minimal variances.


\begin{thm} Suppose the assumptions of Proposition 3.1 hold true. Assume that $\Psi(\varphi(X)) \in L^{2a}(\mathbb{P})$ for some $a>1$ and that the step sequence is $\gamma_{n}=\frac{\gamma_{1}}{n^{p}}$ with $ \frac{1}{2} < p <1$ and $\gamma_{1}>0$. Suppose that the density $f_{\varphi(X)}$ is continuous and strictly positive on its support. Let $(\overline{\xi}_{n},\overline{C}_{n})_{n\geq1}$ be the sequence of Cesaro means defined by:
$$
\overline{\xi}_{n}:=\frac{\xi_{0}+...+\xi_{n-1}}{n}, \ \ \ \overline{C}_{n}:=\frac{C_{0}+...+C_{n-1}}{n}, \ \ n\geq1.
$$

\noindent This sequence satisfies the following CLT:
\begin{equation}{}
                \sqrt{n} 
            \left(
            \begin{array}{c}
                \overline{\xi}_{n}-\xi^{*}_{\alpha} \\
                \overline{C}_{n}-C^{*}_{\alpha}
            \end{array}
            \right)\stackrel{\mathcal{L}}{\rightarrow} \mathcal{N}(0,\Sigma^{*}) \hspace*{.5cm} \mbox{ as } n \rightarrow +\infty,
\label{TCLopt}
\end{equation}

\noindent where 
\begin{eqnarray*}
\Sigma^{*}_{1,1} & = & \frac{1}{f_{\varphi(X)}^{2}(\xi^{*}_{\alpha})} \textnormal{Var}\left(\mbox{ \bf 1}_{\left\{\varphi(X+\theta^{*}_{\alpha})\geq \xi^{*}_{\alpha}\right\}}\frac{p(X+\theta^{*}_{\alpha})}{p(X)}\right), \vspace*{0.2cm}\\
\Sigma^{*}_{1,2} & = &  \Sigma^{*}_{2,1} =\frac{1}{(1-\alpha)f_{\varphi(X)}(\xi^{*}_{\alpha})}\textnormal{Cov}\left( \left(\Psi(\varphi(X+\mu^{*}_{\alpha}))-\xi^{*}_{\alpha}\right)\mbox{\bf 1}_{\left\{\varphi(X+\mu^{*}_{\alpha})>\xi^{*}_{\alpha}\right\}}\frac{p(X+\mu^{*}_{\alpha})}{p(X)}, \right. \\
& & \hspace*{5cm} \left. \mbox{\bf 1}_{\left\{\varphi(X+\theta^{*}_{\alpha})\geq \xi^{*}_{\alpha}\right\}}\frac{p(X+\theta^{*}_{\alpha})}{p(X)}\right), \vspace*{0.2cm} \\
\Sigma^{*}_{2,2} & = & \frac{1}{(1-\alpha)^{2}}\textnormal{Var}\left(\left(\Psi(\varphi(X+\mu^{*}_{\alpha}))-\xi^{*}_{\alpha}\right) \mbox{ \bf 1}_{\left\{\varphi(X+\mu^{*}_{\alpha}) \geq \xi^{*}_{\alpha}\right\}}\frac{p(X+\mu^{*}_{\alpha})}{p(X)}\right).  
\end{eqnarray*}
\end{thm}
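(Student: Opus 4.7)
The plan is to apply the Ruppert--Polyak averaging principle (Theorem~2.3) to the full four-dimensional recursion \eqref{algoglob} on the state $Z_n=(\xi_n,C_n,\theta_n,\mu_n)$ and then to read off the Gaussian marginal law of the averaged pair $(\overline{\xi}_n,\overline{C}_n)$. Writing \eqref{algoglob} as $Z_n=Z_{n-1}-\gamma_n(l(Z_{n-1})+\varepsilon_n)$ with $l(z):=\mathbb{E}[L(z,X)]$ and $\varepsilon_n:=L(Z_{n-1},X_n)-l(Z_{n-1})$ an $\mathcal{F}_n$-adapted martingale increment, the a.s. convergence $Z_n\to z^*:=(\xi^*_\alpha,C^*_\alpha,\theta^*_\alpha,\mu^*_\alpha)$ is already supplied by Proposition~3.1. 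It remains to identify $M:=Dl(z^*)$ and $\Gamma:=\lim_n \mathbb{E}[\varepsilon_{n+1}\varepsilon_{n+1}^T\mid\mathcal{F}_n]$ and to check conditions $(i)$--$(iv)$ of Theorem~2.3.

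The structural heart of the proof is the lower block-triangular shape of $M$. After the invariance-by-translation change of variables that motivates the IS weights, the four components of $l$ reduce to $l_1(\xi,\theta)=e^{-\rho|\theta|^b}\bigl(1-\tfrac{1}{1-\alpha}\mathbb{P}(\varphi(X)\geq\xi)\bigr)$, $l_2(\xi,C,\mu)=C-V_\Psi(\xi)$, $l_3(\xi,\theta)=e^{-2\rho|\theta|^b}\nabla Q_1(\theta,\xi)$ and $l_4(\xi,\mu)=\tfrac{e^{-2\rho|\mu|^b}}{1+G(-\mu)^{2c}+\xi^2}\nabla Q_2(\mu,\xi)$. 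Ordering the coordinates $(\xi,C,\theta,\mu)$, three cancellations produce the vanishing blocks: $(a)$ $\partial_\theta l_1|_{z^*}=\partial_\mu l_1|_{z^*}=0$ because the bracket $\bigl(1-\tfrac{1}{1-\alpha}\mathbb{P}(\varphi(X)\geq\xi^*_\alpha)\bigr)$ vanishes at $z^*$, annihilating the contribution of the derivative of $e^{-\rho|\theta|^b}$; $(b)$ $l_2$ is independent of $\mu$ and, by the same Rockafellar--Uryasev identity invoked in the proof of Theorem~2.4, $\partial_\xi l_2|_{z^*}=0$; $(c)$ neither $l_3$ nor $l_4$ depends on $C$. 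Hence $M=\left(\begin{smallmatrix}A&0\\ B&D\end{smallmatrix}\right)$ with $A=\mathrm{diag}(m_1,1)$, $m_1:=\tfrac{e^{-\rho|\theta^*_\alpha|^b}}{1-\alpha}f_{\varphi(X)}(\xi^*_\alpha)>0$, and $D$ block-diagonal with positive definite entries $e^{-2\rho|\theta^*_\alpha|^b}\nabla^2 Q_1(\theta^*_\alpha,\xi^*_\alpha)$ and a positive multiple of $\nabla^2 Q_2(\mu^*_\alpha,\xi^*_\alpha)$, the positive-definiteness coming from the strict convexity of $Q_i(\cdot,\xi^*_\alpha)$ under (B2). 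Consequently, $M$ is uniformly repulsive.

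The four Ruppert--Polyak conditions are then checked as follows: $(i)$ is automatic; $(iii)$ is vacuous since there is no remainder; $(ii)$, with $b=2a>2$, follows from the exponential-moment hypothesis on $\Psi(\varphi(X))$ combined with the deterministic weight estimate~\eqref{majorationdensity} already used in the proof of Proposition~3.1, which bounds $\mathbb{E}[|\varepsilon_{n+1}|^{2a}\mid\mathcal{F}_n]\,\mathbf{1}_{\{|Z_n-z^*|\leq A\}}$ uniformly in $n$; $(iv)$ follows from $Z_n\to z^*$ a.s. and dominated convergence, giving $\Gamma=\mathrm{Cov}(L(z^*,X))$. Because $M$ is lower block-triangular with zero upper-right block, so is $M^{-1}$, its top-left block being $A^{-1}=\mathrm{diag}(1/m_1,1)$. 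The $(\xi,C)$-marginal of $M^{-1}\Gamma(M^{-1})^T$ therefore reduces to $A^{-1}\Gamma_A (A^{-1})^T$, where $\Gamma_A$ is the joint covariance of $(L_1,L_2)$ at $z^*$. Substituting the explicit formulas for $L_1,L_2$ (the deterministic parts drop out of the covariance) and observing that the factors $e^{\pm\rho|\theta^*_\alpha|^b}$ cancel against $m_1^{-2}$ and $m_1^{-1}$ yields the three entries of $\Sigma^*$ displayed in the statement.

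The main obstacle is condition $(ii)$ of Theorem~2.3: obtaining a locally uniform $L^{2a}$ bound on the conditional moments of $\varepsilon_{n+1}$ in presence of the IS weights $p(X+\theta)/p(X)$ and $p(X-\theta)^2/(p(X)p(X-2\theta))$ evaluated near the random limits $\theta^*_\alpha,\mu^*_\alpha$. This is precisely the role of the exponential prefactors $e^{-\rho|\theta|^b},e^{-2\rho|\theta|^b}$ and of the denominator $1+G(-\mu)^{2c}+\xi^2$ built into the $L_i$ in \eqref{HIS}: a tedious but standard adaptation of the Lemaire--Pag\`es estimates underlying \eqref{majorationdensity} and of the linear-growth verification already carried out for Proposition~3.1 propagates the control to the $2a$-th moment required here.
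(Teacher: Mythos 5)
Your proof follows the same route as the paper's: recast \eqref{algoglob} as a Robbins--Monro recursion with martingale increments and apply the Ruppert--Polyak averaging principle (Theorem 2.3), the paper itself merely asserting that conditions $(i)$--$(iv)$ of \eqref{hypii} ``can be checked easily''. Your filling-in --- the block lower-triangular Jacobian, the vanishing of the $\theta$-derivative of $l_1$ at $z^*$ via the Rockafellar--Uryasev identity, and the cancellation of the $e^{\pm\rho|\theta^*_{\alpha}|^b}$ prefactors in $M^{-1}\Gamma(M^{-1})^T$ --- is correct and in fact more explicit than the published argument.
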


\begin{proof}[\textbf{ Proof.}] The proof is built like the one of Theorem 2.4. If we denote $h$ the mean function of the global algorithm $h(z)=\mathbb{E}[L(z,X)]$, the algorithm~\eqref{algoglob} can be written as 
\begin{eqnarray}
Z_{n}=Z_{n-1}-\gamma_{n}\left(h(Z_{n-1})+\tilde{\epsilon}_{n}\right), \  n\geq 1, \ \ 
Z_{0}=\left(\xi_{0},0\right),  \  \xi_{0} \in L^{1}(\mathbb{P}),
\label{RMstd}
\end{eqnarray} 

\noindent where the first two components of $h$ are the same function as the ones in the proof of Theorem 2.4 and $(\tilde{\epsilon}_{n})_{n\geq1}$ denotes the $\mathcal{F}_{n}$-adapted martingale increments sequence where
\begin{eqnarray*}
\tilde{\epsilon}_{1,n} & := & \frac{1}{1-\alpha}\left(\mathbb{P}\left(\varphi(X)\geq\xi\right)_{| \xi=\xi_{n}}-\mbox{ \bf 1}_{ \left\{\varphi(X_{n+1}+\theta_{n}) \geq \xi_{n} \right\} }\frac{p(X_{n+1}+\theta_{n})}{p(X_{n+1})}\right) , \vspace*{0.2cm}\\
\tilde{\epsilon}_{2,n} & := & \frac{1}{1-\alpha}\left(\mathbb{E}[(\Psi(\varphi(X))-\xi)\mbox{ \bf 1}_{\left\{\varphi(X) \geq \xi\right\}}]_{|\xi=\xi_{n}} \right. \\ 
& & \hspace*{3cm} \left.  -(\Psi(\varphi(X_{n+1}+\mu_{n}))-\xi_{n})\mbox{ \bf 1}_{\left\{\varphi(X_{n+1}+\mu_{n})\geq \xi_{n}\right\}}\frac{p(X_{n+1}+\mu_{n})}{p(X_{n+1})}\right).
\end{eqnarray*}

\noindent One can check easily that the sequence $(\tilde{\epsilon}_{n})_{n\geq1}$ satisfies $(i)-(iv)$ of~\eqref{hypii}.
\end{proof}

\noindent \textbf{Remarks:} $\bullet$ There exists a CLT for the whole sequence $(Z_{n})_{n\geq1}$ and for its empirical mean $(\overline{Z}_{n})_{n\geq1}$ according to Ruppert and Polyak averaging principle. We only stated the result for the two components of interest (the ones which converge to VaR and CVaR respectively) since we only need rough estimates for the other two (see below).

\smallskip
\noindent $\bullet$ In the first Central Limit Theorem (Theorem 2.4) for
quantile estimation, the factor $\alpha(1-\alpha)$ is the variance
of the indicator function of the event $\left\{{\varphi(X) \geq \xi^{*}_{\alpha}}\right\}$. With our recursive IS procedure, it is replaced by the variance of the shifted indicator function modified by the measure change: $\mbox{Var}\left(\mbox{\bf 1}_{\left\{\varphi(X+\theta^{*}_{\alpha}) > \xi^{*}_{\alpha}\right\}}\frac{p(X+\theta^{*}_{\alpha})}{p(X)}\right).$ For further details on the rate of convergence of the unconstrained recursive importance sampling procedure, we refer to \cite{lemairepages}.

\bigskip
 Now, let us point out an important issue. The algorithm~\eqref{algoglob} raises an important problem numerically speaking. Actually, we have two algorithm $\xi_{n}$ and $(\theta_{n},\mu_{n})$ that are in competitive conditions, $i.e.$ on one hand, we added an IS procedure to $(\xi_{n})_{n\geq1}$ to improve the convergence toward $\xi^{*}_{\alpha}$, and on the other hand, the adjustment of the parameters $(\theta_{n},\mu_{n})$ ``need'' some samples $X_{n+1}$ satisfying $\varphi(X_{n+1}-\theta_{n})> \xi_{n}$ and $\varphi(X_{n+1}-\mu_{n})>\xi_{n}$ ($\Psi \equiv Id$) which tend to become rare events. Somehow, we postponed the problems resulting from rare events on the IS procedure itself which may ``freeze''. This in term suggests to break the link between the VaR-CVaR and the IS procedures by introducing a VaR \emph{companion procedure} that will drive the IS parameters to the tail distribution. A solution to do this is to make the confidence level increase slowly from a lower value (say $\alpha_{0}=50\%$) up to the target level $\alpha$. This kind of incremental threshold increase has been already proposed in \cite{kroese} in a different framework. This idea is developed in the next section.

\subsection{How to control the move towards the critical risk area: the final procedure}
From a theoretical point of view, so far, we considered the purely adaptive approach where we approximate $(\xi^{*}_{\alpha},C^{*}_{\alpha},\theta^{*}_{\alpha},\mu^{*}_{\alpha})$ using the \emph{same innovation sequences}. From a numerical point of view, we only need a rough estimate of the optimal IS parameters $(\theta^{*}_{\alpha},\mu^{*}_{\alpha})$. So that we are led to break the algorithm into two phases. Firstly, we compute a rough estimate of the optimal IS parameters $\left(\theta_{M},\mu_{M}\right)$ with a small number of iterations $M$ and in a second time, estimate the VaR$_{\alpha}$ and the CVaR$_{\alpha}$ with those optimized parameters with $N$ iterations ($M\ll N$ in practice).

Now, in order to circumvent the problem induced by the IS procedure, we propose to introduce companion VaR procedure (without IS, $i.e.$, based on $H_{1}$ from Section 2.2) that will lead the IS parameters into the critical risk area during a first phase of the simulation, say the first $M$ iterations. An idea to control the growth of $\theta_{n}$ and $\mu_{n}$ at the beginning of the algorithm, since we have no idea on how to twist the distribution of $\varphi(X)$, is to move slowly toward the target critical risk area (at level $\alpha$) in which $\varphi(X)$ exceeds $\xi$ by introducing a non-decreasing sequence $\alpha_{n}$ slowly converging to $\alpha$ during the first phase. Since the algorithm for the CVaR component $C_{n}$ is free of $\alpha$, by doing so, we only modify the VaR procedure $\xi_{n}$. The function $H_{1}$ in~\eqref{VaRCVaRequalsteps} is replaced by its counterpart which depends on the moving confidence level $\alpha_{n}$, namely 
\begin{equation}
\hat{\xi}_{n}= \hat{\xi}_{n-1}-\gamma_{n} \hat{H}_{1}\left(\hat{\xi}_{n-1},X_{n},\alpha_{n}\right),\ \ n\geq1, \hat{\xi}_{0}=\xi_{0}\in
L^{1}(\mathbb{P}). 
\label{RMVaRAlphaSequence}
\end{equation}

\noindent where,
$$
\forall \ \xi \in \mathbb{R}, \ \forall x \in \mathbb{R}^{d}, \forall \ \hat{\alpha} \in ]0,1[, \   \hat{H}_{1}\left(\xi,x,\hat{\alpha}\right)=1-\frac{1}{1-\hat{\alpha}} \mbox{ \bf
1}_{\left\{\varphi(x) \geq \xi\right\}}.
$$ 
 
\noindent The sequence $\left(\hat{\xi}_{n}\right)_{n\geq0}$ is only designed to drive ``smoothly'' the IS procedures toward the ``critical area'' at the beginning of the procedure, say during the first $M$ iterations and in no case to approximate $\xi^{*}_{\alpha}$ or $C^{*}_{\alpha}$. To be more precise, we define recursively the variance reducer sequence $(\hat{\theta}_{n})_{n\geq1}$, $(\hat{\mu}_{n})_{n\geq1}$ by plugging at each step $n$, $\hat{\xi}_{n-1}$ into $L_{3}(., \hat{\theta}_{n-1}, X_{n})$ and $L_{4}(., \hat{\mu}_{n-1}, X_{n})$ as defined in Section 3.1. This reads as follows, for $n\geq1$, 
\begin{equation}
\left\{
\begin{array}{l}
\hat{\xi}_{n}= \hat{\xi}_{n-1}-\gamma_{n} \hat{H}_{1}\left(\hat{\xi}_{n-1},X_{n},\alpha_{n}\right), \ \ \hat{\xi}_{0} \in L^{1}(\mathbb{P}), \vspace*{0.2cm}\\
\hat{\theta}_{n}=\hat{\theta}_{n-1}-\gamma_{n}L_{3}\left(\hat{\xi}_{n-1},\hat{\theta}_{n-1},X_{n}\right),  \ \ \theta_{0} \in \mathbb{R}^{d} \vspace*{0.2cm}, \\
\hat{\mu}_{n}=\hat{\mu}_{n-1}-\gamma_{n}L_{4}\left(\hat{\xi}_{n-1},\hat{\mu}_{n-1},X_{n}\right), \ \ \mu_{0} \in \mathbb{R}^{d}.
\end{array}
\label{UAISAlphaSequence}
\right.
\end{equation}

\noindent Although, we are not really interested in the asymptotic of this procedure $(\hat{\xi}_{n})$, its theoretical convergence follows from Theorem 2.2: as a matter of fact if we define a remainder term $r_{n}$ by:
$$
r_{n}:=\hat{H}_{1}\left(\hat{\xi}_{n-1},X_{n},\alpha_{n}\right)-H_{1}\left(\hat{\xi}_{n-1},X_{n}\right), \hspace*{0.2cm} n\geq1,
$$  
 
\noindent the procedure defined by~\eqref{UAISAlphaSequence} now reads
\begin{equation}{}
\hat{\xi}_{n}= \hat{\xi}_{n-1}-\gamma_{n}
(H_{1}(\hat{\xi}_{n-1},X_{n})+r_{n}), \ \ 
n\geq1, \ \ \hat{\xi}_{0}\in L^{1}(\mathbb{P}).
\label{RMVaRAlphaSequence2}
\end{equation}

\noindent One checks that 
$$
|r_{n}|\leq \frac{\left|\alpha_{n}-\alpha\right|}{(1-\alpha)^{2}},
$$

\noindent so that Assumption~\eqref{condremaindersequ} of Theorem 2.2 is satisfied as soon as
$$
\sum_{n\geq1} \gamma_{n} (\alpha-\alpha_{n})^{2}<+\infty.
$$

\subsection{A final procedure for practical implementation}

In practice, we divided our procedure into two phases: 

\noindent $\rhd$ Phase I is devoted to the estimation of the variance reducers $(\theta^{*}_{\alpha},\mu^{*}_{\alpha})$ using~\eqref{UAISAlphaSequence}. The moving confidence level $\alpha$ has been settled as follows ($M\approx15000$) :
$$
\alpha_{n}=50\% \mbox{ for } \ 1\leq n \leq M_{1}:=M/3, \ \alpha_{n}=80\% \mbox{ for } M_{1}<n\leq 2M_{1}, \ \alpha_{n}=\alpha \mbox{ for }  \ 2M_{1}<n\leq M.
$$  

\noindent $\rhd$ Phase II produces some estimates for $(\xi^{*}_{\alpha},C^{*}_{\alpha})$ based on the procedure defined by~\eqref{algoglob} and its Cesaro mean with $N$ iterations. Note that during this phase, we keep on updating the IS parameters adaptively.

Now, we can summarize the two phase of the final procedure by the following pseudo-code:

\begin{algorithm*}
\begin{algorithmic}
\STATE{Phase I: Estimation of $(\mu^{*}_{\alpha},\theta^{*}_{\alpha})$. $M\ll N$ (typically $M\approx N/100$).}
\FOR{$n=1$ to $M$} 
\STATE{$ \hat{\xi}_{n}  =  \hat{\xi}_{n-1}-\gamma_{n} \hat{H}_{1}\left(\hat{\xi}_{n-1},X_{n},\alpha_{n}\right), $}
\STATE{$\hat{\theta}_{n}  =  \hat{\theta}_{n-1}-\gamma_{n}L_{3}\left(\hat{\xi}_{n-1},\hat{\theta}_{n-1},X_{n}\right), $} 
\STATE{$\hat{\mu}_{n}  = \hat{\mu}_{n-1}-\gamma_{n}L_{4}\left(\hat{\xi}_{n-1},\hat{\mu}_{n-1},X_{n}\right). $}
\ENDFOR

\medskip

\STATE{Phase II: Estimation of $(\xi^{*}_{\alpha},C^{*}_{\alpha})$. Set, for instance, $\xi_{0}=\hat{\xi}_{M}$, $C_{0}=0$, $\theta_{0}=\hat{\theta}_{M}$, and $\mu_{0}=\hat{\mu}_{M}$.}
\FOR{$n=1$ to $N$}  
\STATE{$ \xi_{n}        =  \xi_{n-1}       - \gamma_{n}   L_{1}\left(\xi_{n-1},\theta_{n-1},X_{n}\right), $}
\STATE{$ C_{n}          =  C_{n-1}         - \gamma_{n}   L_{2}\left(\xi_{n-1},C_{n-1},\mu_{n-1},X_{n}\right),$}
\STATE{$ \theta_{n}     = \theta_{n-1}    - \gamma_{n}   L_{3}\left(\xi_{n-1},\theta_{n-1}, X_{n}\right), $}
\STATE{$ \mu_{n}        =  \mu_{n-1}       - \gamma_{n}   L_{4}\left(\xi_{n-1},\mu_{n-1}, X_{n}\right), $}

\medskip

\STATE{Compute the Cesaro means}

\medskip

\STATE{$  \bar{\xi}_{n}  =  \bar{\xi}_{n-1} - \frac{1}{n} \left(\bar{\xi}_{n-1}-\xi_{n}\right), $}
\STATE{$  \bar{C}_{n}    =  \bar{C}_{n-1}   - \frac{1}{n}  \left(\bar{C}_{n-1}-C_{n}\right). $}
\ENDFOR

\medskip

\STATE{$(\xi^{*}_{\alpha},C^{*}_{\alpha})$ is    estimated by $(\bar{\xi}_{N},\bar{C}_{N})$.}

\end{algorithmic}
\end{algorithm*}

An alternative, especially as concerns practical implementation, is to replace to Phase~II by 

\medskip
\noindent {\bf Phase~II' in which  the variance reducers coming from Phase~I  are frozen at $\hat{\theta}_{M}$, $\hat{\mu}_{M}$}. The only updated sequence is ($\xi_{n},C_{n}$),  as follows
\begin{eqnarray*}
\xi_{n}       & = & \xi_{n-1}       - \gamma_{n}   L_{1}\left(\xi_{n-1},\hat{\theta}_{M},X_{n}\right), \vspace*{.15cm} \\
C_{n}         & = & C_{n-1}         - \gamma_{n}   L_{2}\left(\xi_{n-1},C_{n-1},\hat{\mu}_{M},X_{n}\right). 
\end{eqnarray*}
\end{section}

%

\begin{section}{Towards some extensions}

\subsection{Extension to exponential change of measure: the Esscher transform}
Considering an exponential change of measure (also called Esscher transform) instead of the mean translation is a rather natural idea that has already been investigated in \cite{kawai} and \cite{lemairepages} to extend the constrained IS stochastic approximation algorithm with repeated projections introduced in \cite{arouna}. We briefly introduce the framework and give the main results without any proofs (for more details, see \cite{lemairepages} and \cite{frikha}). Let $\psi$ denote the cumulant generating function (or $\log$-Laplace) of $X$ \textsl{$i.e.$} the function defined by $\psi(\theta):=\log\mathbb{E}[e^{\left\langle \theta,X\right\rangle}]$. We assume that $\psi(\theta)< +\infty$, which implies that $\psi$ is an infinitely differentiable convex function and define 
$$
p_{\theta}(x)=e^{\left\langle \theta,x\right\rangle-\psi(\theta)}p(x), \hspace*{0.3cm} x\in \mathbb{R}^{d}.
$$

\noindent We denote by $X^{(\theta)}$ any random variable with distribution $p_{\theta}$. We make the following assumption on the function $\psi$ 
\begin{hyppsi}
\lim_{|\theta|} \psi(\theta)-2\psi\left(\frac{\theta}{2}\right)=+\infty \hspace*{0.3cm} \mbox{ and } \hspace*{0.3cm} \exists \delta>0,\mbox{ } \theta \mapsto \psi(\theta)-\delta |\theta|^{2} \mbox{ is concave.}
\end{hyppsi}

\noindent The two functionals to be minimized are 
\begin{eqnarray}
Q_{1}(\theta,\xi^{*}_{\alpha}) & := & \mathbb{E}\left[\mbox{\bf 1}_{\left\{\varphi(X)>\xi^{*}_{\alpha}\right\}}e^{-\left\langle \theta,X\right\rangle+\psi(\theta)}\right] \\
Q_{2}(\mu,\xi^{*}_{\alpha}) & := & \mathbb{E}\left[(\Psi(\varphi(X))-\xi^{*}_{\alpha})^{2}\mbox{\bf 1}_{\left\{\varphi(X)>\xi^{*}_{\alpha}\right\}}e^{-\left\langle \mu,X\right\rangle+\psi(\mu)}\right].
\end{eqnarray}

\noindent According to Proposition 3 in \cite{lemairepages} as soon as $\psi$ satisfies $(H^{es}_{\delta})$ and that, 
\begin{equation}
\forall \xi \in \mathbb{R}, \forall \theta \in \mathbb{R}^{d}, \hspace*{0.3cm}  \mathbb{E}[|X| \left(1+\Psi(\varphi(X))^{2}\right)e^{\left\langle \theta,X\right\rangle}]<+\infty,
\label{hypfunction}
\end{equation}

\noindent for every $\xi \in \mathbb{R}$, the functions $Q_{1}(.,\xi)$ and $Q_{2}(.,\xi)$ are finite, convex, differentiable on $\mathbb{R}^{d}$, go to infinity at infinity, so that $\arg \min Q_{1}(.,\xi)$ and $\arg \min Q_{2}(.,\xi)$ are non empty. Moreover, their gradients are given by 
\begin{eqnarray}
\nabla_{\theta} Q_{1}(\theta,\xi) & = & \mathbb{E}\left[(\nabla \psi(\theta)-X^{(-\theta)})\mbox{\bf 1}_{\left\{\varphi(X^{(-\theta)})>\xi\right\}}\right]e^{\psi(\theta)-\psi(-\theta)} \\
\nabla_{\mu} Q_{2}(\mu,\xi) & = & \mathbb{E}\left[(\nabla \psi(\mu)-X^{(-\mu)})(\Psi(\varphi(X^{(-\mu)}))-\xi)^2\mbox{\bf 1}_{\left\{\varphi(X^{(-\mu)})>\xi\right\}}\right]e^{\psi(\mu)-\psi(-\mu)}
\end{eqnarray}

\noindent with $\nabla \psi(\theta)=\frac{\mathbb{E}[Xe^{\left\langle \theta,X\right\rangle}]}{\mathbb{E}[e^{\left\langle \theta,X\right\rangle}]}.$ Now, the main result of this section is the following theorem (for more details, we refer to \cite{lemairepages} and \cite{frikha}).

\begin{thm}Suppose that $\psi$ satisfies $(H^{es}_{\delta})$ and that $(A2)_{1}$, (A3) hold. Assume that~\eqref{hypfunction} is fulfilled and that 
$$\forall x \in \mathbb{R}^{d}, |\Psi(\varphi(x))| \leq Ce^{\frac{\lambda}{4}|x|}\ \ \mbox{ and } \ \ \mathbb{E}[|X|^{2}e^{\lambda |X|}]<+\infty.$$

\noindent One considers   the recursive procedure  
\begin{equation}
Z_{n}=Z_{n-1}-\gamma_{n}L(Z_{n-1},X_{n}), \;n\ge1, \;  Z_{0}=(\xi_{0},C_{0},\theta_{0},\mu_{0})
\label{EsVaRCVaR}
\end{equation}

\noindent where $(\gamma_{n})_{n\geq1}$ satisfies the usual step assumption (A1), $Z_{n}:=(\xi_{n},C_{n},\theta_{n},\mu_{n})$ and each component of $L$ is defined by  
\begin{equation*}
\left\{\begin{array}{l}
L_{1}\left(\xi_{n-1},\theta_{n-1},X^{(\theta_{n-1})}_{n}\right):=e^{-\frac{\psi(\theta_{n-1})+\psi(-\theta_{n-1})}{2}}\left(1-\frac{1}{1-\alpha}\mbox{\bf 1}_{\left\{\varphi(X^{(\theta_{n-1})}_{n})>\xi_{n-1}\right\}}\mbox{ }e^{\psi(\theta_{n-1})-\left\langle X^{(\theta_{n-1})}_{n},\theta_{n-1}\right\rangle}\right), \vspace*{.15cm} \\
L_{2}\left(\xi_{n-1},C_{n-1},\mu_{n-1},X^{(\mu_{n-1})}_{n}\right):=C-\bar{w}(\xi_{n-1},\mu_{n-1},X^{(\mu_{n-1})}_{n}), \vspace*{.15cm} \\
L_{3}\left(\xi_{n-1},\theta_{n-1},X^{(-\theta_{n-1})}\right):=\mbox{\bf 1}_{\left\{\varphi(X^{(-\theta_{n-1})})>\xi_{n-1}\right\}}(\nabla \psi(\theta_{n-1})-X^{(-\theta_{n-1})}), \vspace*{.15cm} \\
L_{4}\left(\xi_{n-1},\mu_{n-1},X^{(-\mu_{n-1})}\right):=\frac{e^{-\frac{\lambda}{2}\sqrt{d}|\nabla \psi(-\mu_{n-1})|}}{1+\xi_{n-1}^2}(\Psi(\varphi(X^{(-\mu_{n-1})}))-\xi_{n-1})^{2}\mbox{\bf 1}_{\left\{\varphi(X^{(-\mu_{n-1})})>\xi_{n-1}\right\}}  \vspace*{.15cm}\\
		\hspace*{5cm} \times(\nabla \psi(\mu_{n-1})-X^{(-\mu_{n-1})}), 
\end{array}
\right.
\end{equation*}

\noindent  with $\displaystyle 
\bar{w}(\xi,\mu,x):=\Psi(\xi)+\frac{1}{1-\alpha}(\Psi(\varphi(x))-\Psi(\xi))\mbox{\bf 1}_{\{\varphi(x)>\xi\}}\mbox{ }e^{\psi(\mu)-\left\langle \mu,x\right\rangle}$.

\smallskip Then, $Z_n$  converges $a.s.$ toward $z^{*}:=\left(\xi^{*}_{\alpha},C^{*}_{\alpha},\theta^{*}_{\alpha},\mu^{*}_{\alpha}\right)$, where $\xi^{*}_{\alpha}$ is a square integrable $\textnormal{VaR}_{\alpha}$-valued random variable, $C^{*}_{\alpha}= \Psi\textnormal{-CVaR}_{\alpha}(\varphi(X))$, $\theta^{*}_{\alpha}$ is a (square integrable)  $\arg \min Q_{1}(.,\xi^{*}_{\alpha})$-valued random vector and $\mu^{*}_{\alpha}$ is a (square integrable)  $\arg \min Q_{2}(.,\xi^{*}_{\alpha})$-valued random vector.

\end{thm}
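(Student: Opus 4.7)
The strategy is a direct transposition of the argument of Proposition~3.1 to the Esscher setting: I first establish the a.s.\ convergence of the sub-vector $(\xi_n,\theta_n,\mu_n)$ by invoking the extended Robbins--Monro Theorem~2.2, and then recover the convergence of $C_n$ toward $C^*_\alpha$ via a Cesaro/martingale argument identical to the one used in Section~2.2.

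First I would identify the mean function
\[
l(\xi,\theta,\mu):=\mathbb{E}\bigl[(L_1,L_3,L_4)(\xi,\theta,\mu,X)\bigr]
\]
and its zero set. Using the Esscher identity $\mathbb{E}[f(X^{(\theta)})]=\mathbb{E}[f(X)e^{\langle\theta,X\rangle-\psi(\theta)}]$, the first component of $l$ equals $e^{-(\psi(\theta)+\psi(-\theta))/2}(1-\frac{1}{1-\alpha}\mathbb{P}(\varphi(X)>\xi))$, while the second and third reduce (up to strictly positive multiplicative factors bounded on compact sets) to $\nabla_\theta Q_1(\theta,\xi)$ and $\nabla_\mu Q_2(\mu,\xi)$ respectively, owing to the explicit expressions of $\nabla_\theta Q_1$ and $\nabla_\mu Q_2$ recalled just before the statement. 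Hence
\[
\mathcal{T}^*=\{l=0\}=\{\xi^*_\alpha\}\times\arg\min Q_1(\cdot,\xi^*_\alpha)\times\arg\min Q_2(\cdot,\xi^*_\alpha),
\]
which is non-empty and bounded thanks to $(H_\delta^{es})$ combined with (4.2)--(4.3) (these ensure the functions $Q_i(\cdot,\xi)$ are finite, strictly convex, coercive).

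Next I would verify the mean-reversion assumption~(2.3) of Theorem~2.2. For any $\zeta=(\xi,\theta,\mu)\notin\mathcal{T}^*$ and $\zeta^*\in\mathcal{T}^*$, the inner product $\langle\zeta-\zeta^*,l(\zeta)\rangle$ splits as three non-negative terms: the VaR contribution is handled by the (strict) monotonicity of $\xi\mapsto\mathbb{P}(\varphi(X)\le\xi)$, and the two IS contributions are non-negative by convexity of $\theta\mapsto Q_1(\theta,\xi)$ and $\mu\mapsto Q_2(\mu,\xi)$ (with strict inequality whenever $\theta\ne\theta^*_\alpha$ or $\mu\ne\mu^*_\alpha$). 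Each term carries a strictly positive prefactor ($e^{-\rho|\cdot|^b}$ replaced by $e^{-(\psi(\theta)+\psi(-\theta))/2}$ or $e^{-\frac{\lambda}{2}\sqrt d |\nabla\psi(-\mu)|}/(1+\xi^2)$), which are strictly positive and locally bounded.

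The core difficulty -- and the step I expect to be the main obstacle -- is the verification of the sub-linear (quadratic) growth condition~(2.4) for the three Esscher components. For $L_1$, one bounds
\[
\mathbb{E}\bigl[L_1(\xi,\theta,X^{(\theta)})^{2}\bigr]\le C\Bigl(1+e^{-(\psi(\theta)+\psi(-\theta))}\mathbb{E}\bigl[e^{2(\psi(\theta)-\langle \theta,X^{(\theta)}\rangle)}\bigr]\Bigr)=C\bigl(1+e^{\psi(-\theta)-\psi(\theta)+\psi(\theta)-\psi(-\theta)}\bigr)
\]
(after a change of measure that flips $X^{(\theta)}$ into $X^{(-\theta)}$); more carefully, assumption $(H_\delta^{es})$ yielding $\psi(\theta)-2\psi(\theta/2)\to\infty$ (coupled with $\theta\mapsto\psi(\theta)-\delta|\theta|^2$ concave) is exactly what makes the normalization $e^{-(\psi(\theta)+\psi(-\theta))/2}$ compensate the exponential blow-up of the weights, leaving at worst a linear growth in $|\theta|$. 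For $L_3$, the integrand is $(\nabla\psi(\theta)-X^{(-\theta)})\mathbf{1}_{\{\varphi>\xi\}}$ and condition~\eqref{hypfunction} together with the concavity of $\psi(\cdot)-\delta|\cdot|^2$ gives $|\nabla\psi(\theta)|=\mathcal{O}(|\theta|)$ and $\mathbb{E}[|X^{(-\theta)}|^2]\le C(1+|\theta|^2)$. The delicate case is $L_4$: the factor $(\Psi(\varphi(X^{(-\mu)}))-\xi)^2$ is controlled through $|\Psi(\varphi(x))|\le Ce^{\lambda|x|/4}$ and $\mathbb{E}[|X|^2 e^{\lambda|X|}]<\infty$, but only after the tempering factor $e^{-\frac{\lambda}{2}\sqrt d|\nabla\psi(-\mu)|}/(1+\xi^2)$ is factored out; this tempering is precisely designed so that the resulting bound is $\le C(1+\xi^2+|\mu|^2)$. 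The technical handling of these bounds is done in detail in~\cite{lemairepages} and~\cite{frikha}; I would refer to those for the bookkeeping and focus on adapting the VaR/CVaR specific estimates.

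Having checked $(i)$ continuity of $l$, $(ii)$ mean-reversion and $(iii)$ sub-linear growth, Theorem~2.2 applied to the 3-tuple $(\xi_n,\theta_n,\mu_n)$ yields the announced a.s.\ convergence to a $\mathcal{T}^*$-valued random vector. Finally, to conclude for $C_n$, I would reproduce verbatim the argument of Section~2.2: write $S_n C_n$ as a Cesaro-weighted sum involving $V_\Psi(\xi_k)$ plus a martingale $N^\gamma_n=\sum_k\gamma_k\Delta\widetilde M_k$, where $\Delta\widetilde M_k$ has conditional variance bounded (uniformly in $k$ on the convergence event) by a quantity controlled via the Esscher weights and $(A2)_1$. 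The Kronecker Lemma applied to the a.s.\ convergent square-integrable martingale and Cesaro's Lemma applied to the continuous map $\xi\mapsto V_\Psi(\xi)$ at $\xi^*_\alpha$ give $C_n\stackrel{a.s.}\to C^*_\alpha$, completing the proof.
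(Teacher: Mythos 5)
The paper gives no proof of this theorem: it states the result and explicitly defers to \cite{lemairepages} and \cite{frikha} for details, the intended argument being the transposition of the proof of Proposition~3.1 from the translation setting to the Esscher setting. Your plan is exactly that transposition, and its structure is sound: the identification of the mean function via the identity $\mathbb{E}[f(X^{(\theta)})]=\mathbb{E}[f(X)e^{\langle\theta,X\rangle-\psi(\theta)}]$ is correct (in particular $\mathbb{E}\bigl[\mathbf{1}_{\{\varphi(X^{(\theta)})>\xi\}}e^{\psi(\theta)-\langle X^{(\theta)},\theta\rangle}\bigr]=\mathbb{P}(\varphi(X)>\xi)$, and the means of $L_3,L_4$ are positive multiples of $\nabla_\theta Q_1$, $\nabla_\mu Q_2$), the quadratic-norm computation for $L_1$ does collapse to a constant since $e^{-(\psi(\theta)+\psi(-\theta))}\mathbb{E}\bigl[e^{2\psi(\theta)-2\langle\theta,X^{(\theta)}\rangle}\bigr]=1$ (your displayed intermediate exponent is garbled but the conclusion is right), and the Cesaro/martingale treatment of $C_n$ is the same as in Section~2.2. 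Like the paper, you defer the genuinely technical growth bounds for $L_3$ and $L_4$ to the references, so your sketch sits at the same level of completeness as the source.

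One point you inherit from the paper's own Proposition~3.1 rather than resolve: in the mean-reversion check, the term $\langle\theta-\theta^*_\alpha,\nabla_\theta Q_1(\theta,\xi)\rangle$ pairs a gradient taken at the \emph{current} level $\xi$ with a minimizer $\theta^*_\alpha$ of $Q_1(\cdot,\xi^*_\alpha)$; convexity of $Q_1(\cdot,\xi)$ only guarantees nonnegativity of this inner product against minimizers of $Q_1(\cdot,\xi)$ itself, so the claimed sign is not immediate when $\xi\neq\xi^*_\alpha$. A rigorous treatment needs either a dominance argument using the strictly positive $\xi$-component, or a two-stage (cascading) version of the Robbins--Monro lemma. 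Since the paper glosses over the same point, this is not a defect of your proposal relative to the source, but it is the one place where "reproduce Proposition~3.1 verbatim" does not by itself close the argument.
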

 
%

\subsection{Extension to infinite dimensional setting}
In the above sections, we proposed our algorithm in a finite dimensional setting where the value of the loss $L=\varphi(X)$ is a function of a random vector having values in $\mathbb{R}^{d}$. This is due to the fact that generally the value of a portfolio may depend on a finite number of decisions taken in the past. Thus, the value of the loss at the horizon time $T-t$ may depend on a large number of dates in the past $t_{0}=t<t_{1}<t_{2},...<t_{N}=T-t$, with $N=250$ for a portfolio with time interval $T-t= 1 \ \mbox{year}$. For instance, if we consider a simple portfolio composed of short positions on $250$ calls with a maturity at each $t_{k}$ and a strike $K$. The loss at time $t_{N}=1 \mbox{ year}$ can be written: 
$$
L=\sum_{k=1}^{N} e^{r(t_{N}-t_{k})}(S_{t_{k}}-K)_{+}-e^{r t_{N}}C_{0}^{k},
$$

\noindent where $C_{0}^{i}$ denotes the price of the call of maturity $t_{i}$ and strike $K$, with 
$$
S_{t_{k+1}}=S_{t_{k}}e^{(r-\frac{\sigma^{2}}{2})(t_{k+1}-t_{k})+\sigma \sqrt{(t_{k+1}-t_{k})}Z_{k}}.
$$

\noindent So that, $X=Z=(Z_{1},...,Z_{250})$ is a Gaussian vector with $d=250$. Consequently, with our above procedure, $\theta_{n}$ and $\mu_{n}$ are two random vectors of dimension $d$ and we have to control the growth of each component. If one grows too fast and take too high values, it may provides bad performance and bad estimates of both VaR and CVaR. To circumvent this problem, one can reduce the dimension of the problem by choosing the same shift parameters for several dates, \textit{$i.e.$} for instance 
$$
\theta_{n}=(\underbrace{\theta^{1}_{n},..,\theta^{1}_{n}}_{\mbox{10 times}},...,\underbrace{\theta^{25}_{n},..,\theta^{25}_{n}}_{\mbox{10 times}}).
$$

\noindent Now, we can run the IS algorithm for $\theta^{1},...,\theta^{25}$ so that, we have to deal with a procedure in dimension $25$. It is sub-optimal with respect to the procedure in dimension $250$ but it is more tractable. Another relevant example is a portfolio composed by only one barrier option, for instance a Down \& In Call option 
$$
\varphi(X)=(X_{T}-K)_{+} \mbox{\bf 1 }_{\left\{\min_{\left\{0\leq t \leq T\right\}}X_{t} \leq L\right\}}
$$

\noindent where the underlying $X$ is a process solution of the path-dependent SDE \begin{equation}
\text{d}X_{t}=b(X_{t})\ \mbox{d}t+\sigma(X_{t})\ \mbox{d}W_{t}, \mbox{  } X_{0}=x \in \mathbb{R}^{d},
\label{SDE}
\end{equation}

\noindent $W=(W_{t})_{t\in [0,T]}$ being a standard Brownian motion. A naive approach is to discretize~\eqref{SDE} by an Euler-Maruyama scheme $\bar{X}=(\bar{X}_{t_{k}})_{k\in \left\{0,...,n\right\}}$ 
$$
\bar{X}_{t_{k+1}}=\bar{X}_{t_{k}}+b(\bar{X}_{t_{k}})(t_{k+1}-t_{k})+\sigma(\bar{X}_{t_{k}})(W_{t_{k+1}}-W_{t_{k}}), \hspace*{0.2cm} \bar{X}_{0}=x_{0}\in \mathbb{R}.
$$

\noindent This kind of approximation is known to be poor for this kind of options. In this case, our IS parameters $\theta$ and $\mu$ are $n$-dimensional vectors which correspond to the number of steps in the Euler scheme. Now, if you consider a portfolio composed by several barrier options with different underlyings, the dimension can increase greatly and becomes an important issue, so that our first IS procedure is no longer acceptable and tractable. To overcome this problem, the idea is to shift the entire distribution of $X$ in~\eqref{SDE} thanks to a Girsanov transformation. This last case is analyzed and investigated in \cite{lemairepages}. It can be adapted to our framework (see \cite{frikha} for further developments).

\end{section}


\begin{section}{Numerical examples}
For the sake of simplicity, we focus in this section on the finite dimensional setting and on the computation of the regular CVaR$_{\alpha}$ ($\Psi \equiv Id$). We first consider the usual Gaussian framework in which the exponential change of measure coincide with the mean translation change of measure. Then we illustrate the algorithm~\eqref{EsVaRCVaR} in a simple case.

\subsection{Gaussian framework} In this setting, $X\sim\mathcal{N}(0,I_{d})$ and $p$ is given by 
$$
p(x)=(2 \pi)^{-\frac{d}{2}} e^{-\frac{|x|^{2}}{2}}, \hspace{0.4cm} x\in \mathbb{R}^{d},
$$

\noindent so that (B3) and (B4) are satisfied with $\rho=\frac{1}{2}$ and $b=2$. In this setting, we already noticed that 
\begin{eqnarray*}
L_{3}\left(\xi,\theta,x\right) & := & \mbox{\bf 1}_{\left\{\varphi(X-\theta)
\geq \xi\right\}} (2\theta-x), \\
L_{4}(\xi,\mu,x) & := & \frac{1}{1+G(-\mu)+\xi^2}
\left(\varphi(X-\mu)-\xi\right)_{+}^{2}(2\mu-x).
\end{eqnarray*}

\noindent Moreover, we use a stepwise constant sequence $\alpha_{n}$ that slowly converges toward $\alpha$ as proposed in Section 3.3. We consider three different portfolios of options (puts and calls) on 1 and 5 underlying assets (except for the last case). In the third case, we study the behaviour of a portfolio composed by a power plant that produces electricity from gas with short positions in calls on electricity. The assets are modeled as geometric Brownian motions for the first
two examples. In the third example, the assets (electricity and gas
day-ahead prices) are modeled as exponentials of an Ornstein-Uhlenbeck process. This last derivative is priced using an approximation of Margrabe formulae (see e.g.~\cite{margrabe}). We assume an annual risk free interest rate of 5\%. In each example, we use three different values of the confidence level $\alpha= 95\%, \ 99\%, \ 99.5\%$, which are specified in the Tables. We use the following test portfolios:

\begin{example} Short position in one put with strike $K=110$ and maturity $T=1$ year on a stock with initial value $S_{0}=100$ and volatility $\sigma=20\%$. The loss is given by
$$
\varphi_{1}(X):= (K-S_{T})_{+}-e^{rT}P_{0}
$$ 

\noindent with 
$$
S_{T}:=S_{0}e^{\left(\left(r-\frac{\sigma^{2}}{2}\right)T+\sigma \sqrt{T}X\right)}
$$

\noindent where $X\sim\mathcal{N}(0,1)$ and $P_{0}$ is the initial price at which the put option was sold (it is approximately equal to 10.7). The dimension $d$ of the structural vector $X$ is equal to 1. The numerical results are reported in Table 1. 

\end{example}

\begin{example} Short positions in 10 calls and 10 puts on each of the five underlying assets, all options having the same maturity 0.25 year. The strikes are set to 130 for calls, to 110 for puts and the initial spot prices to 120. The underlying assets have a volatility of 20\% and are assumed to be uncorrelated. The dimension $d$ of the structural vector $X$ is equal to 5. The numerical results are reported in Table 2. 

\end{example}

\begin{example} Short position in a power plant that produces electricity day by day with a maturity of $T=1$ month and 30 long positions in calls on electricity day-ahead price with the same strike $K=60$. Electricity and gas initial spot prices are $S^{e}_{0}=40 \ \$ /\textnormal{MWh}$ and $S^{g}_{0}=3 \ \$ /\textnormal{MMBTU}$ (\textnormal{BTU}: British Thermal Unit) with a Heat Rate equals $h_{R}=10 \ \textnormal{BTU}/\textnormal{kWh}$ and generation costs $C=5 \ \$/\textnormal{MWh}$. The two spot prices have a correlation of 0.4. The payoff can be written
$$
\varphi_{3}(X)= \sum_{k=1}^{30} \left( e^{r(T-t_{k})}\left(S^{e}_{t_{k}}-h_{R} S^{g}_{t_{k}}-C\right)_{+}-P^{c}_{0}e^{rT}\right)+\left(e^{rT}C_{0}-e^{r(T-t_{k})}\left(S^{e}_{t_{k}}-K\right)_{+}\right)
$$

\noindent where $P^{c}_{0}$ is a proxy of the price of the option on the power plant and is equal to 149.9 and $C_{0}$ is the price of the call options which is equal to 3.8. This is a sum of spark spread options where we decide to exchange gas and electricity each day during one month. The dimension $d$ of the structural vector $X$ is equal to 60. The numerical results are reported in Table 3. 
\end{example}

 The results displayed in the following tables correspond to the VaR, the CVaR and the variance reduction ratios estimations for both VaR and CVaR procedure using a number of steps specified in the first column, still for the same three levels of $\alpha$. The variance ratios correspond to the ratio of an estimation of the asymptotic variance using the averaging procedure of~\eqref{VaRCVaRequalsteps} divided by an estimation of the asymptotic variance using the averaging procedure of~\eqref{algoglob}: VR$_{\text{VaR}}$ corresponds to the variance reduction ratio of the VaR estimate and VR$_{\text{CVaR}}$ corresponds to the variance reduction ratio of the CVaR estimate. The results emphasize that the IS procedure yields a very significant, sometimes huge variance reduction especially when $\alpha$ is closed to 1. 

\noindent In the three examples, we define the step sequence by $\gamma_{n}=\frac{1}{n^{\beta}+100}$ where $\beta=\frac{3}{4}$. 

\begin{table}[ht]
\caption{Example 1 Results}
\centering
\begin{tabular}{c c c c c c}
\hline\hline Number of steps & $\alpha$ & VaR & CVaR &
VR$_{\text{VaR}}$ & VR$_{\text{CVaR}}$\\ [0.5ex]

\hline
10 000 & 95\%   & 24.6 & 29.9 & 5.5  & 30.5 \\
  & 99\%        & 34.4 & 37.5 & 11.1 & 125.3 \\
  & 99.5\%      & 37.8 & 41.4 & 13.4 & 192.9 \\
100 000 & 95\%  & 24.6 & 30.4 & 6.6  & 32.2 \\
    & 99\%      & 34.2 & 37.9 & 11.5 & 127.9 \\
  & 99.5\%      & 37.3 & 40.7 & 15.1 & 185 \\
500 000 & 95\%  & 24.6 & 30.3 & 7.7  & 31.3 \\
    & 99\%      & 34.2 & 38   & 14.6 & 118.4 \\
  & 99.5\%      & 37.3 & 40.5 & 15.5 & 184 \\ [1ex]
\hline
\end{tabular}
\label{Portfolio 1}
\end{table}

\begin{table}[ht]
\caption{Example 2 Results}
\centering
\begin{tabular}{c c c c c c}
\hline\hline Number of steps & $\alpha$ & VaR & CVaR &
VR$_{\text{VaR}}$ & VR$_{\text{CVaR}}$\\ [0.5ex]

\hline
10 000 & 95\%  & 339   & 440.5 & 6.5  & 14.9 \\
  & 99\%       & 493.1 & 561.4 & 10.1 & 24.3 \\
  & 99.5\%     & 540.1 & 606.4 & 18.2 & 37.9\\
100 000 & 95\% & 349.8 & 439.7 & 6.7  & 17 \\
    & 99\%     & 495.7 & 563.8 & 11.3 & 28.6 \\
  & 99.5\%     & 544.8 & 607.8 & 18.9 & 40.3 \\
500 000 & 95\% & 352.4 & 439.6 & 6.8  & 17.3 \\
    & 99\%     & 495.2 & 563   & 11.1 & 27.7 \\
  & 99.5\%     & 545.3 & 608.4 & 19.2 & 37 \\ [1ex]
\hline
\end{tabular}
\label{Portfolio 2}
\end{table}

\begin{table}[ht]
\caption{Example 3 Results}
\centering
\begin{tabular}{c c c c c c}
\hline\hline Number of steps & $\alpha$ & VaR & CVaR &
VR$_{\text{VaR}}$ & VR$_{\text{CVaR}}$\\ [0.5ex]

\hline
10 000 & 95\%   & 115.7 & 150.5 & 3.4  & 6.8 \\
  & 99\%        & 169.4 & 196   & 8.4  & 12.9 \\
  & 99.5\%      & 186.3 & 213.2 & 13.5 & 20.3\\
100 000 & 95\%  & 118.7 & 150.5 & 4.5  & 8.7 \\
    & 99\%      & 169.4 & 195.4 & 12.6 & 17.5 \\
  & 99.5\%      & 188.8 & 212.9 & 15.6 & 29.5 \\
500 000 & 95\%  & 119.2 & 150.4 & 5    & 9.2 \\
    & 99\%      & 169.8 & 195.7 & 13.1 & 18.6 \\
  & 99.5\%      & 188.7 & 212.8 & 17   & 29  \\ [1ex]
\hline
\end{tabular}
\label{Portfolio 3}
\end{table} 

\subsection{Esscher transform: the NIG distribution}
Now, we consider a simple case of portfolio composed by a long position on a Call option with strike $K=0.6$ and maturity $T=1$ year, where the underlying is $e^{X_{T}}$ ($X_{0}=0$), where $X_{T}$ is a Normal Inverse Gaussian (NIG) variable, $X_{T}\sim \textnormal{NIG}(\alpha,\beta,\delta,\mu)$, $\alpha>0$, $|\beta|\leq \alpha$, $\delta>0$, $\mu \in \mathbb{R}$. Its density is given by
$$
p_{X_{T}}(x,\mbox{ }\alpha,\mbox{ }\beta,\mbox{ }\delta,\mbox{ }\mu):=\frac{\alpha \delta K_{1}(\alpha \sqrt{\delta^{2}+(x-\mu)^{2}})}{\pi\sqrt{\delta^{2}+(x-\mu)^{2}}}e^{\delta \gamma +\beta(x-\mu)},
$$

\noindent where $K_{1}$ is a modified Bessel function of the second kind and $\gamma=\sqrt{\alpha^{2}-\beta^{2}}.$ Note that the generating function of the NIG distribution is given by 
$$
\psi(\theta)=\mu \theta+\delta(\gamma-\sqrt{\alpha^{2}-(\beta+\theta)^{2}}),
$$
\noindent and is not well defined for every $\theta \in \mathbb{R}$, so that we change the algorithm parametrization (see section 4.3 of \cite{lemairepages}). The loss of the portfolio can be written $L=\varphi_{4}(X_{T})=50(e^{X_{T}}-K)_{+}-e^{rT}C_{0}$. Note that the price $C_{0}$ is computed by a crude Monte Carlo  and is approximately equal to 42. The parameters of the NIG random variable $X_{T}$ are $\alpha=2.0,\mbox{ }\beta=0.2,\mbox{ }\delta=0.8,\mbox{  }\mu=0.04$. We want to compare the variance reduction achieved by the translation of the mean (see section 3.1) and the one achieved by the Esscher Transform (see section 4.1). In the Robbins-Monro procedure, we define the step sequence by $\gamma_{n}=\frac{1}{n^{\beta}+100}$ where $\beta=\frac{3}{4}$.

\medskip
\noindent $\rhd$ \textsl{Translation case.} The functions $L_{3}$ and $L_{4}$ of the IS procedure are defined by: 
\begin{eqnarray*}
L_{3}(\xi,\theta,X) & := & e^{-2|\theta|}\mbox{\bf 1}_{\varphi(X-\theta)}\frac{p{'}(X-2\theta)}{p(X)}\left(\frac{p(X-\theta)}{p(X-2\theta)}\right)^{2}, \\
L_{4}(\xi,\mu,X) & := & \frac{e^{-2|\mu|}}{1+G(-\mu)+\xi^{2}}(\varphi(X-\mu)-\xi)^{2}_{+}\frac{p{'}(X-2\mu)}{p(X)}\left(\frac{p(X-\mu)}{p(X-2\mu)}\right)^{2},
\end{eqnarray*}

\noindent where $p'$ is easily obtained using the relation on the modified Bessel function $K_{1}'(x)=\frac{1}{x}K_{1}(x)-K_{2}(x).$

\medskip
\noindent $\rhd$ \textsl{Esscher Transform.} In this approach, the functions $L_{3}$ and $L_{4}$ are defined by 
\begin{eqnarray*}
L_{3}(\xi,\theta,X) & := & \mbox{\bf 1}_{\varphi(X^{(-\theta)})\geq \xi}(\nabla \psi(\theta)-X^{(-\theta)}), \vspace*{.15cm} \\
L_{4}(\xi,\mu,X) & := & \frac{e^{-|\mu|}}{1+\xi^{2}}(\varphi(X^{(-\mu)})-\xi)^{2}_{+}(\nabla \psi(\mu)-X^{(-\mu)}),
\end{eqnarray*}

\noindent where $X^{(\pm \theta)}\sim \textnormal{NIG}(\alpha,\beta\pm \theta, \delta,\mu).$

\noindent Table 4 compares the variance reduction ratios of the VaR$_{\alpha}$ and CVaR$_{\alpha}$ algorithms achieved by the translation of the mean (VR$^{tr}_{VaR}$ and VR$^{tr}_{CVaR}$) and the one achieved by the Esscher Transform (VR$^{es}_{VaR}$ and VR$^{es}_{CVaR}$).

\begin{table}[ht]
\caption{Example 4 Results}
\centering
\begin{tabular}{c c c c c c c c}
\hline\hline Number of steps & $\alpha$ & VaR & CVaR &
VR$^{tr}_{\text{VaR}}$ & VR$^{tr}_{\text{CVaR}}$ & VR$^{es}_{\text{VaR}}$ & VR$^{es}_{\text{CVaR}}$\\ [0.5ex]
\hline
10 000 & 95\%   & 85.8  & 215.7 & 5   & 10   & 4.2  & 58.8\\
  & 99\%        & 217   & 518   & 6   & 12   & 8    & 60 \\
  & 99.5\%      & 304   & 748   & 8   & 25   & 8.9  & 110 \\
100 000 & 95\%  & 87.2  & 215.1 & 5   & 12   & 4.5  & 60\\
    & 99\%      & 218   & 521   & 5   & 12   & 8.2  & 70\\
  & 99.5\%      & 303.5 & 747.8 & 7   & 30   & 12   & 100\\
500 000 & 95\%  & 87.9  & 215.6 & 5   & 9    & 5    & 57 \\
    & 99\%      & 227   & 518.9 & 5.5 & 11.8 & 11.5 & 68 \\
  & 99.5\%      & 312.8 & 741.8 & 6   & 31   & 10   & 123 \\ [1ex]
\hline
\end{tabular}
\label{Portfolio 4}
\end{table} 

The IS procedure is very efficient when $\mathbb{P}(\varphi(X) \geq \xi^{*}_{\alpha})=1-\alpha$ is close to zero and becomes more and more efficient as $\alpha$ grows to 1. Even for the complex portfolio considered in Example 3, where $X$ is a Gaussian vector with $d=60$, it is possible to achieve a great variance reduction for both VaR$_{\alpha}$ and CVaR$_{\alpha}$.

 We observed that   IS based on Esscher transform is well adapted to distributions with heavy tails ($i.e.$  heavier tails than the normal distribution). It is therefore suitable when large values are more frequent than for the normal distribution, as it is the case when the vector $X$ is a NIG random variable. Indeed, in this setting, the IS parameters modify the parameter $\beta$ which controls the asymmetric shape of the NIG distribution. We think that the IS procedure by Esscher transform outperforms the IS procedure by mean translation when the IS parameter impacts on the symmetry of the distribution. 

\end{section}

\begin{section}{Concluding remarks.} In this article, we propose a recursive procedure to compute efficiently the
Value-at-Risk and the Conditional Value-at-Risk using the same innovation for both procedures. In our approach, for a given risk level $\alpha$, the VaR$_{\alpha}$ and the CVaR$_{\alpha}$ are estimated simultaneously by a regular RM algorithm. Ruppert and Polyak's averaging principle provides an asymptotically efficient procedure. The estimates satisfy a Gaussian CLT. However, due to the slow convergence of the global procedure since we are interested in rare events, the regular version of this algorithm cannot be used in practice. To speed-up and thus greatly reduce the number of scenarios, we devise an unconstrained adaptive IS procedure. The resulting procedure provides estimates that satisfy a CLT with minimal variances. To optimize the move to the critical risk area, the risk level $\alpha$ can be temporarily replaced by a slowly increasing level $\alpha_{n}$ (stepwise constant in practice) converging to $\alpha$. This produces a VaR \emph{companion procedure} $(\hat{\xi}_{n})_{n\geq1}$ that controls the IS change of measure parameters $(\hat{\theta}_{n},\hat{\mu}_{n})$. Numerically speaking, the resulting procedure converges efficiently and can drastically reduce the variance. It is possible to extend the methods to portfolio whose losses depend on a general diffusion process, using Girsanov transform to introduce a potentially infinite dimensional variance reducer. Finally, we aim at extending the method by implementing  low-discrepancy sequences  in our procedure instead of pseudo-random numbers. Preliminary numerical experiments showed a significant improvement of the convergence rate . This also raises interesting theoretical problems. See~\cite{LaPaSa} for some first theoretical results in that direction in a  one-dimensional framework and~\cite{frikha} for further developments in higher dimensional setting.
\end{section}

\end{document}